\documentclass[11pt]{article}
\usepackage[a4paper,margin=2.5cm]{geometry}
\usepackage[utf8]{inputenc}
\usepackage[T1]{fontenc}
\usepackage[linesnumbered,ruled,vlined]{algorithm2e}
\usepackage{siunitx}
\usepackage{amsmath}
\usepackage{amsthm}
\usepackage{amssymb}
\usepackage{subcaption}
\usepackage{graphicx}
\usepackage{caption}
\usepackage{authblk}
\usepackage{hyperref}
\usepackage[sort]{cite}
\usepackage{tikz}
\usepackage{comment}

\usetikzlibrary{arrows.meta,positioning,calc,fit}
\definecolor{granat}{RGB}{0,55,103}
\definecolor{oi-blue}{HTML}{0072B2}
\definecolor{oi-orange}{HTML}{E69F00}
\definecolor{oi-green}{HTML}{009E73}
\definecolor{oi-yellow}{HTML}{F0E442}
\definecolor{oi-pink}{HTML}{CC79A7}
\definecolor{oi-lightblue}{HTML}{56B4E9}
\definecolor{oi-red}{HTML}{D55E00}
\definecolor{oi-purple}{HTML}{882255} 
\newtheorem{theorem}{Theorem}

\newtheorem{corollary}[theorem]{Corollary}

\theoremstyle{definition}

\title{A Graph Theoretic Approach to Spatial Modeling of Post Disaster Shelter Camps Using Rainbow and Roman Domination Parameters}

\author[1]{Didem Gözüpek\thanks{Corresponding author: \href{mailto:didem.gozupek@gtu.edu.tr}{didem.gozupek@gtu.edu.tr}}}
\author[2]{Piotr Lange\thanks{\href{mailto:piotr.lange@student.pg.edu.pl}{piotr.lange@student.pg.edu.pl}}}
\author[2]{Joanna Raczek\thanks{\href{mailto:joanna.raczek@pg.edu.pl}{joanna.raczek@pg.edu.pl}}}

\affil[1]{Computer Engineering Department, Gebze Technical University, 41400 Gebze, Kocaeli, Turkey
}
\affil[2]{Faculty of Electronics, Telecommunications and Informatics, Gdańsk University of Technology, Narutowicza 11/12, 80-233 Gdańsk, Poland
}

\begin{document}

\maketitle
\begin{abstract}
Effective spatial organization of post-disaster shelter camps is essential for ensuring access to basic services while making efficient use of limited space and resources. In this paper, we propose a graph-theoretic framework for shelter-camp facility placement based on rainbow $k$-domination and Roman domination. Rainbow $k$-domination models the simultaneous accessibility of distinct facility types, such as sanitation units, kitchens, water points, and schools, whereas Roman domination is used to represent services with different capacity levels, illustrated through Wi-Fi deployment. We present an $O(nk)$-time algorithm for finding a minimum rainbow $k$-dominating set of a tree with $n$ vertices, which is linear in $n$ for fixed $k$, and computational experiments confirm its scalability on large instances. For general graphs, we establish new lower and upper bounds on the rainbow $k$-domination number. We further investigate its relationship with Roman domination, derive structural properties of graphs attaining the extremal equality between the two parameters, and prove that recognizing such graphs is NP-hard. These results provide a theoretical foundation for domination-based approaches to facility placement in post-disaster shelter planning.
~~\\

\noindent\textbf{Keywords:} Rainbow $k$-domination; Roman domination; post-disaster shelter planning; facility placement; algorithms; NP-hardness.
\end{abstract}

%\flushbottom
%\thispagestyle{empty}

\section{Introduction}

In contemporary settings, emergencies can necessitate rapid and coordinated responses, including the temporary relocation of populations to ensure their safety. Such circumstances may result from natural disasters, technological failures, geopolitical instability, or other disruptions that pose immediate risks to human life. The effective management of these events depends not only on the timely mobilization of resources but also on the ability to establish safe and well-organized temporary settlements that protect affected individuals while broader stabilization measures are implemented. Consequently, preparedness for both short and long-term population displacement has become a critical component of modern crisis response frameworks.

Although natural and anthropogenic hazards have long been present, the scale of their humanitarian impact has increased, largely due to rising population density and the expansion of settlements into vulnerable regions. In earlier periods, many disasters affected relatively few people, as extensive areas were sparsely inhabited or unpopulated. Today, however, with significantly more individuals residing in exposed coastal zones, seismic regions, floodplains, and densely urbanized areas, events of comparable magnitude can result in substantially greater displacement and far more complex sheltering demands.

Within this context, shelter allocation constitutes a critical component of comprehensive disaster management frameworks, as it directly influences humanitarian outcomes and overall system resilience. Existing research on shelters in disaster management focus primarily on determining shelter locations and/or assigning survivors to shelters, while largely neglecting the design and layout of shelter sites (see \cite{su11020399, kilci2015locating, li2017hierarchical, eriskin2024applying, chen2013temporal, zhao2015scenario}). To the best of our
knowledge, the only study explicitly addressing the optimization of shelter layout design is \cite{karsu2019refugee}, where the
authors employ a general block design structure to account for facility requirements and their interrelations.

\subsection{Problem Description and Motivation}
In the aftermath of large-scale disasters, affected populations often require temporary living spaces for extended periods. As a result, emergency shelter sites frequently evolve into structures resembling refugee camps, which demand not only substantial physical space but also careful and deliberate spatial organization.

Humanitarian shelter camps commonly rely on prefabricated container units, referred to as camp containers, to provide temporary housing, as well as essential services such as health clinics, administrative offices, and sanitation facilities. These units are typically purpose-built modular structures rather than repurposed shipping containers. According to available sources, standard units generally have widths of approximately 2.4--3.0 m, lengths of 6--7 m, and heights around 2.6--2.8 m \cite{prefabrik2022container}. Depending on functional needs, variations in size can be manufactured, and modular designs enable multiple units to be combined into larger, more complex structures.

Given that such shelter sites are often intended to operate for prolonged durations and may accommodate thousands of individuals \cite{refugeeUNHCR2021, Lanati2023}, ensuring adequate living conditions becomes a critical concern. In response, the United Nations Refugee Agency has established comprehensive planning standards to guide the design and organization of these environments \cite{unhcr2026emergencyhandbook}. Table~\ref{table:unhcr} summarizes selected requirements from these standards. Notably, service facilities such as communal latrines and water points are required to be located within specified distances of residential units to ensure accessibility, safety, and hygiene. Similar proximity considerations can also be extended to other essential facilities, such as healthcare units, in order to further enhance the overall quality of life within the camp.

\renewcommand{\arraystretch}{1.3}
\begin{table}[h!]
\centering
\small
\begin{tabular}{>{\raggedright\arraybackslash}p{3.2cm} 
                >{\centering\arraybackslash}p{3.5cm} 
                >{\raggedright\arraybackslash}p{6.5cm}}
    \hline
    \textbf{Description} & \textbf{Standard} & \textbf{Further consideration} \\
    \hline
    Communal latrine (toilets) & 1 per 20 persons & Separate areas for men and women. For long-term accommodation, use one household latrine per family. \\
    Latrine distance & 6--50\,m from shelter & Must be close enough to encourage use but far enough to prevent smells and pests. \\
    Water tap stands & 1 per 80 persons & 1 per community. \\
    Water distance & Max.\ 200\,m from household & No dwelling should be further than a few minutes' walk from a distribution point. \\
    Health center & 1 per 20{,}000 persons & 1 per settlement; include water and sanitation facilities. \\
    School & 1 per 5{,}000 persons & 3 classrooms, 50\,m\textsuperscript{2}. \\
    \hline
\end{tabular}
\caption{\label{tab:distances}Practical standards for container-based shelter camps based on the United Nations Refugee Agency Camp Planning Standards \cite{unhcr2026emergencyhandbook}.} \label{table:unhcr}
\end{table}

We assume that an initial spatial layout of containers is given. In practice, containers are often arranged in a grid-like structure, with typical distances of approximately 10~m between adjacent units and 15--30~m between sectors. However, the proposed approach is not restricted to grid layouts and can accommodate arbitrary spatial configurations.

To capture service accessibility, we model the layout as a graph in which each container corresponds to a vertex. Two vertices \(u\) and \(v\) are connected by an edge if the distance between \(u\) and \(v\) is within a certain range. For simplicity, we assume a uniform service range across all facility types, for instance, between 20 m and 40 m, so that an edge is included whenever the distance between the corresponding containers lies within this interval. This graph representation encodes the feasible service relationships between containers.

Given this graph, the goal is to determine which containers should be designated as facilities, their corresponding facility types, as well as which should remain residential. To this end, we adopt a graph-based labeling framework in which each facility type is represented by a distinct label, and each facility container is assigned exactly one label. Residential containers correspond to unlabeled vertices; however, they must have access to all required facility types within their neighborhood. Thus, every residential container must be ``covered'' by nearby facility containers that collectively provide the full set of essential services (e.g., water, sanitation, food distribution). This requirement naturally leads to a formulation based on (singleton) rainbow $k$-domination, defined as follows.

Consider a graph $G$ together with a collection of $k$ colors, where each vertex is assigned any subset of these colors. Such an assignment is called a $k$-rainbow dominating function if every vertex receiving no color has neighbors whose assigned colors collectively include all $k$ colors. The associated parameter $\gamma_{rk}(G)$ is defined as the minimum total number of colors assigned across all vertices \cite{brevsar2008rainbow}.

If we further restrict the assignment so that each vertex receives at most one color, then the concept is known as \emph{rainbow $k$-domination} \cite{bonomo2018domination} (or \emph{singleton rainbow domination}) \cite{ervevs20213}, and the corresponding parameter is denoted by $\tilde\gamma_k(G)$. In a shelter camp, each container is a single-floor unit that can serve as either one type of facility or a residential space. This corresponds to the constraint that at most one color may be assigned to each vertex. Consequently, the practical problem in this setting directly maps to the singleton rainbow domination problem. This problem has also recently appeared in the literature as the \emph{dual server domination problem} \cite{CHELLALI2026251}, corresponding to the case $k=2$, where the authors provide a structural characterization of trees whose rainbow $2$-domination number equals $n-2$.

\begin{figure*}[t]
    \centering
    \includegraphics[width=\textwidth]{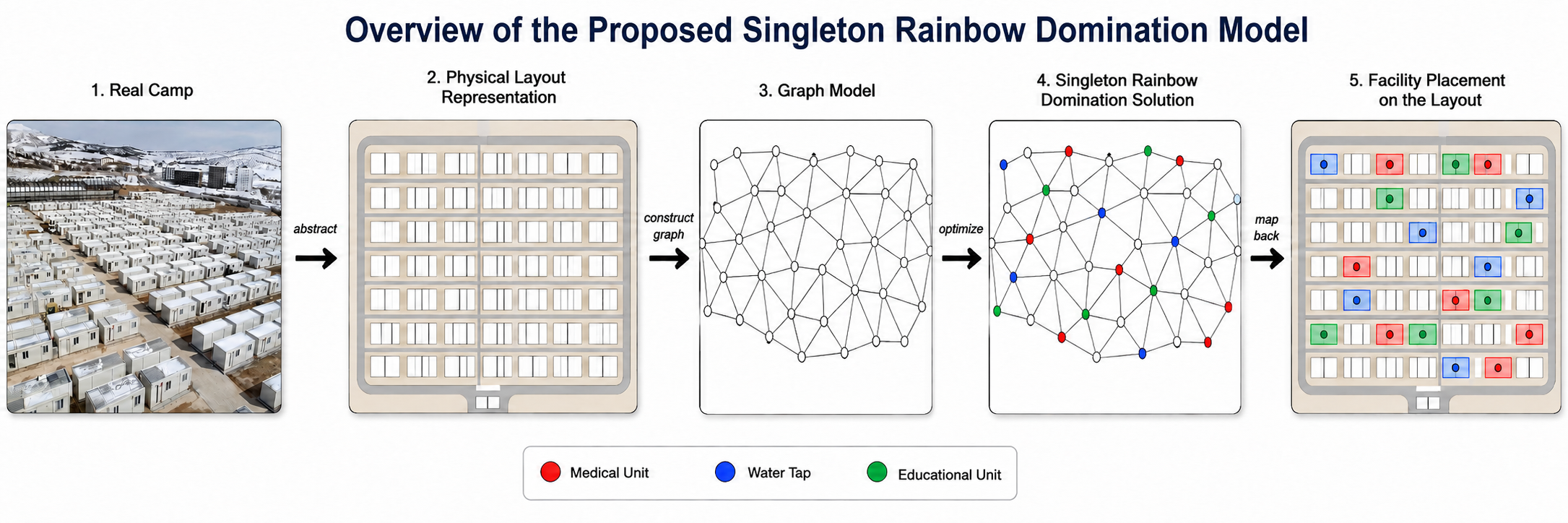}
    \caption{Overview of the proposed graph-based facility placement model for a post-disaster container camp using singleton rainbow domination.}
    \label{fig:proposed-singletonrainbow-model-overview}
\end{figure*}

\begin{figure*}[t]
    \centering
    \includegraphics[width=\textwidth]{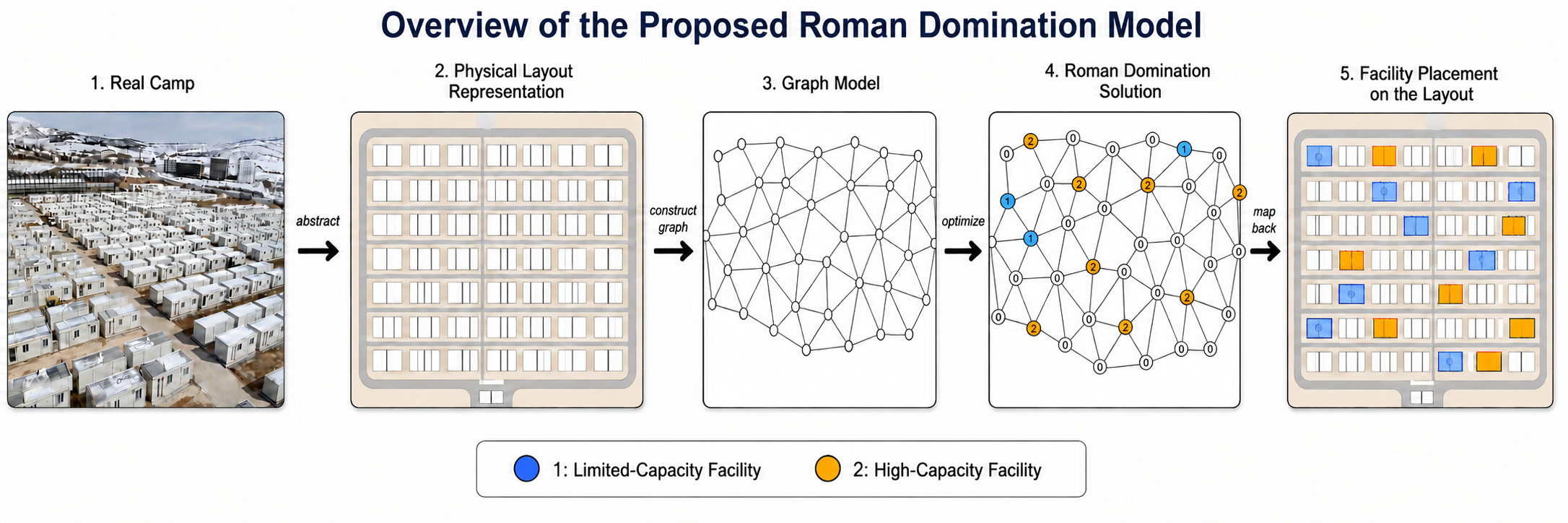}
    \caption{Overview of the proposed graph-based facility placement model for a post-disaster container camp using Roman domination.}
    \label{fig:proposed-singletonrainbow-model-overview2}
\end{figure*}

Figure~\ref{fig:proposed-singletonrainbow-model-overview} illustrates an example of a shelter-camp and its corresponding graph model, where vertices represent containers and edges indicate admissible proximity for service provision. Colored vertices correspond to facility containers assigned different service types such as medical units, water taps, and educational units, while uncolored vertices represent residential containers. Each residential container must be adjacent to facility containers covering all required service types, corresponding to singleton rainbow domination. The output of the singleton rainbow domination model can then be mapped back to the original facility layout to determine the locations of the required facilities.

Providing wireless Internet access is an important service in modern shelter camps. Since such facilities are usually established under budget constraints, communication infrastructures should be inexpensive, easy to deploy, and energy-efficient. While modern Wi-Fi networks often rely on 5~GHz devices offering higher throughput, the corresponding coverage area is typically smaller and signal attenuation caused by obstacles is more severe. In contrast, 2.4~GHz access points provide better coverage and penetration through obstacles, making them attractive for temporary and low-cost deployments. Typical Wi-Fi access points operating in the 2.4 GHz band provide  outdoor coverage reaching 90-150~m under favorable conditions and indoor coverage of approximately 30-45~m, because different materials attenuate wireless signals to varying degrees, so the actual coverage can vary significantly depending on the wall composition and its thickness. For this reason, waterproof outdoor-rated access points, although more expensive, can be installed outside the shelter, significantly increasing Wi‑Fi coverage and reducing signal attenuation caused by the walls.

 This requirement naturally leads to a formulation based on Roman domination, defined as follows. Given a graph $G$, Roman domination considers an assignment of the values $0$, $1$, and $2$ to its vertices. The defining condition is that every vertex assigned the value $0$ must have at least one neighbor with value $2$. The weight of such an assignment is the sum of the values assigned to all vertices, and the Roman domination number $\gamma_R(G)$ is the minimum possible weight over all assignments satisfying this condition \cite{cockayne2004roman}.
Since, in our model, an edge joins two shelters located within a distance of 40 meters, Roman domination provides a natural framework for determining a cost-effective deployment of Wi-Fi routers throughout a shelter camp. A vertex assigned value 1 may correspond to a standard access point, potentially implemented using older and less expensive hardware with limited throughput. In contrast, a vertex assigned value 2 may represent a more capable networking device offering higher throughput, larger coverage, or additional communication resources. The resulting Roman dominating function models a cost-efficient deployment of wireless infrastructure while ensuring that every container without a router is located within the service area of at least one high-capacity communication node.

%In the context of shelter facility layout, the values $0$, $1$, and $2$ may be interpreted as different facility-capacity levels. A vertex assigned $0$, $1$ and $2$ corresponds to a residential container, a limited-capacity facility, and a high-capacity facility, respectively. For instance, a vertex assigned color $1$ can represent a small first-aid point, a small sanitation unit, or a basic supply point, whereas a vertex assigned color $2$ can represent a larger medical unit, a central sanitation facility, or a main distribution center. The role of value $1$ is to model situations in which a local facility is needed, but installing a high-capacity facility would be unnecessary or too costly.

Figure \ref{fig:proposed-singletonrainbow-model-overview2} illustrates the corresponding scenario for Roman domination, where vertices with value 1 (blue vertices) and 2 (orange vertices) correspond to low and high capacity facilities, respectively. Uncolored vertices, equivalently vertices assigned value (0), must be located within the service range of at least one high-capacity facility. As in the singleton rainbow domination setting, the resulting vertex assignments can subsequently be mapped back to the physical layout to determine the locations and capacity levels of the facilities.

Rainbow domination and its variants have been extensively studied in graph theory as natural generalizations of classical domination concepts, capturing scenarios where multiple types of resources or services must be simultaneously provided \cite{brevsar2005paired, Bresar2020}. These parameters arise in applications such as frequency assignment, sensor networks, and resource allocation problems, where different functionalities must be distributed across a network while ensuring full coverage. Roman domination, on the other hand, has attracted significant attention due to its strong combinatorial structure and connections to defense and protection models \cite{cockayne2004roman, Chellali2020}. Despite their rich theoretical development, these parameters have not yet been systematically explored in the context of spatial planning for humanitarian shelter systems, where multiple essential services must be jointly accessible under strict proximity constraints.

Existing studies on shelter allocation primarily focus on facility location models that optimize distances, capacities, or evacuation flows. However, these approaches typically treat different facility types independently or aggregate them into a single objective, thereby overlooking the combinatorial structure arising from the need to simultaneously provide multiple distinct services within constrained spatial neighborhoods.

To the best of our knowledge, there is no existing framework that models shelter-camp layout using graph-theoretic domination parameters capable of capturing multi-type service coverage in a unified and structurally analyzable way. In particular, the relationship between (singleton) rainbow 2-domination and Roman domination has not been investigated in this application domain. Addressing this gap is important for developing mathematically grounded and computationally efficient tools that can support the design of shelter layouts under realistic operational constraints.

The article first investigates the properties of the rainbow domination number and then explores its relationship with the Roman domination number. By analyzing the interplay between rainbow domination and Roman domination, we aim to identify structural dependencies that may simplify optimization procedures, as insights gained from one parameter can potentially facilitate the analysis and optimization of the other. 

The remainder of the paper is organized as follows. Section~2 presents a linear‑time algorithm for trees for rainbow $k$-domination together with empirical evaluation of this algorithm on large tree instances. Next section presents structural results for rainbow $k$-domination and relating this number to Roman domination number. This section finishes with proving NP‑hardness of deciding whether the rainbow 2-domination number is equal to $1.5$ times the Roman domination number.

\section{A Linear-Time Algorithm for Minimum Rainbow $k$-Dominating Set for Trees}
In this section, we introduce the definitions and fundamental properties of the singleton rainbow domination and Roman domination parameters in graphs, including several basic results from the literature. We then present a linear-time algorithm for finding a minimum singleton rainbow dominating set in trees.

%\subsection{Problem formulation}
  Let $G=(V,E)$ be a graph and let $k$ be a positive integer. A {\em rainbow $k$-dominating function} $f$ of $G$ assigns to each node of $G$ exactly one set of $\emptyset, \{1\},\{2\},\dots,\{k\}$ in such a way that if $f(v) = \emptyset$ for some node $v\in V(G)$, then $v$ is adjacent to a node $u$ with $f(u)=\{i\}$ for each $i=1,2,\dots, k$. In other words, if $f(v) = \emptyset$, then $|\bigcup_{u\in N(v)}f(u)| \ge k$. We will refer to elements of $\{\{1\},\{2\},\dots,\{k\}\}$ as {\em color sets}. 
  
  The {\em rainbow $k$-domination number} of $G$ is denoted by $\tilde\gamma_k(G)$ and equals the minimum number $\Sigma_{v\in V}|f(v)|$ over all rainbow $k$-dominating functions $f$ of $G$. Additionally, denote by $V_{\{i\}}$ the set of all nodes $v\in V$ with $f(v)=\{i\}$, where $i=1,2,\dots,k$ and let $V_\emptyset$ be the set of all nodes $v$ with $f(v)=\emptyset$.
  
For example, Fig.~\ref{Ex2} depicts a tree $T$ with
$\tilde{\gamma}_3(T)$=~11. Each color set is represented by a distinct vertex color, while white denotes the empty set. A graph may admit more than one minimum rainbow $k$-dominating set. For instance, either of the vertices $12$ and $14$, which belong to the red color class in Fig.~\ref{Ex2}, may instead be assigned to the green or blue color class without violating the rainbow $3$-domination condition. Moreover, for any distinct $i,j\in \{1,2,\ldots,k\}$, interchanging the color classes $V_{\{i\}}$ and $V_{\{j\}}$ yields another minimum rainbow $k$-dominating set of $G$.

%\subsection{A Linear-Time Algorithm for Trees}
Paper~\cite{CHANG20108} shows that determining the $k$-rainbow domination number is NP-hard even when restricted to bipartite graphs or to chordal graphs. The same paper presents a linear-time algorithm that finds minimum $k$-rainbow dominating sets in trees. Since this algorithm cannot be easily adapted for finding rainbow $k$-domination number, we present an algorithm that uses different approach that computes a minimum rainbow $k$-dominating set of a tree in $O(nk)$ time, where $n$ denotes the order of the tree and $k$ is the number of colors. In particular, the algorithm runs in linear time with respect to $n$ when $k$ is fixed. 

The tree-order is a partial ordering on the nodes of a rooted tree $T$, where $u < v$ if and only if the unique path from the root to $v$ passes through $u$. If $v\in V(T)$, then let $C(v)$ be the set of all children of a vertex $v$ (if $v$ is a leaf, then $C(v)=\emptyset$), and let $F(v)$ be the father of $v$ (if $v$ is the root, then $F(v)=\emptyset$). 

The algorithm, whose pseudocode is outlined in listing~\ref{RanibowF1}, proceeds the nodes of $T$ along its tree ordering twice: Phase~1 proceeds from the last to the root, while Phase~2 from the root to the last node.  Thus, the algorithm traverses the vertex set of $T$ twice.

\subsubsection*{Conceptual Outline of the Algorithm for $k=3$}
In Phase~1 each node is assigned exactly one set of $\emptyset, \{1\},\{2\},\dots,\{k\}$. This assignment is preliminary and may be refined during Phase~2. Each node gets an optimal color set according to the information passed to it by its descendants. 

\begin{figure}[h]
\begin{subfigure}[t]{0.49\textwidth}
\centering
\includegraphics[scale=0.62]{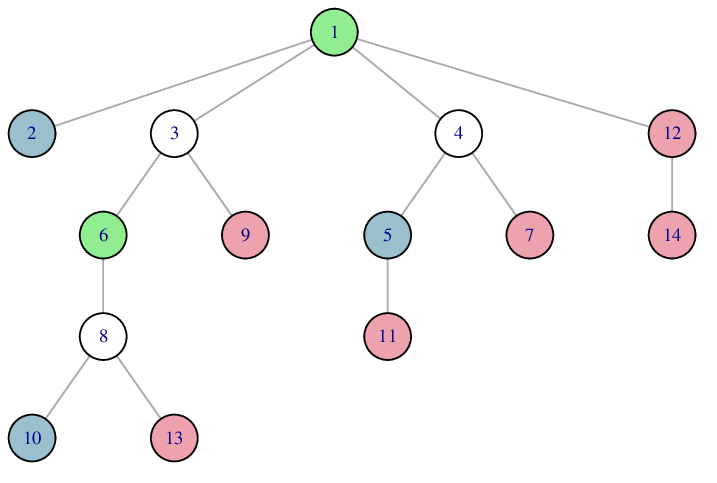}
\caption{Tree $T_1$ after performing Phase~1 of the algorithm for $k=3$}\label{Ex1}
\end{subfigure}
\begin{subfigure}[t]{0.49\textwidth}
\centering
\includegraphics[scale=0.62]{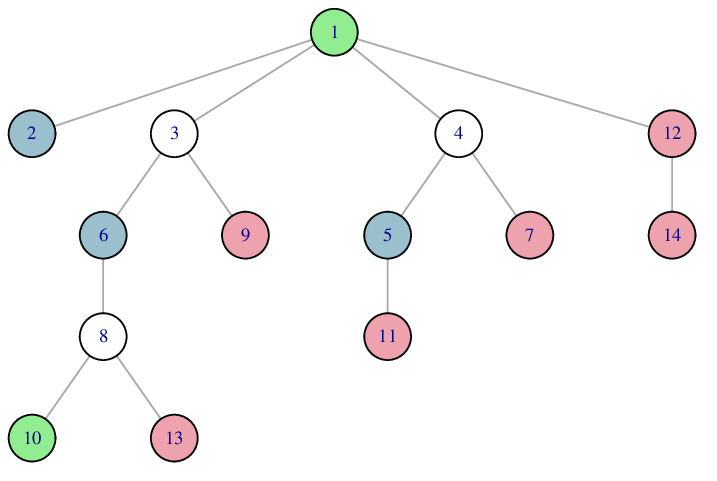}
\caption{Tree $T_1$ after performing Phase~2 of the algorithm for $k=3$}\label{Ex2}
\end{subfigure}
\caption{Linear algorithm for trees, situations~\ref{s1} and~\ref{s2}. Each color set is visualized using a unique vertex color, and white corresponds to the empty set.}
\end{figure}

For example, Fig.~\ref{Ex1}  presents a tree $T_1$ after performing Phase~1. This result is not valid, as $f(3) =\emptyset$, even though this vertex has access to only two colors in its neighborhood (rather than three). The assignment will be revised in Phase~2 of the algorithm (see Fig~\ref{Ex2}). 

An analogous situation arises in tree $T_2$ (see Fig.~\ref{Ex3}, where vertex~8 is not adjacent to three distinct color sets. However, after Phase~2 (illustrated in Fig.~\ref{Ex4}), a minimum rainbow‑3 dominating set is obtained.

The example trees $T_1$ and $T_2$ illustrate how the algorithm operates and what types of situations may arise during its execution. The four basic scenarios are listed and explained below.
\begin{enumerate}
    \item \label{s1} When determining which color set to assign to a vertex $v$ which has at most $k-2$ distinct color sets in $C(v)$, the algorithm first checks which color sets have not yet been used by its already-processed siblings. It then assigns, whenever possible, one of these unused color sets to $v$. This situation happens when node~10 is processed.
    \item \label{s2}If a vertex $v$ has exactly $k-1$ distinct color sets among its children and $v$ is not the root, then we set $f(v)=\emptyset$, and the parent of $v$ is required to receive the missing color set while being processed. This situation takes place when nodes~4 and~8 are processed. Since vertex~3 is examined by the algorithm after vertex~4, its parent, vertex~1, cannot be forced to take a different color. In this situation, the information about recoloring green vertices to blue and blue vertices to green in the sub-tree rooted in~3 is stored at vertex~3 and propagated to its descendants during Phase~2.
    \item \label{s3} If a vertex $v$ contains all color sets in $C(v)$, then $v$ may be assigned $f(v)=\emptyset$, as illustrated for vertex~1 in Fig.~\ref{Ex3}.
    \item \label{s4} If a vertex~$v$ is required by one (or more) of its children to receive color set $\{j\}$, then $f(v)=\{j\}$ providing this color lacks among children of $F(v)$. For example, vertex~6 (see Fig.~\ref{Ex3}) demands vertex~4 to be green. Green is missing at vertex~1, so vertex~4 is assigned color set green. Notwithstanding identical situation for vertex~2, this node gets color blue, since it is processed later than~4 and blue lacks in the neighborhood of vertex~1. Hence, all descendants of 2 interchange colors green  and blue during phase~2, which results in an optimal solution (see Fig.~\ref{Ex4}).
\end{enumerate}
\begin{figure}[h]
\begin{subfigure}[t]{0.49\textwidth}
\centering
\includegraphics[scale=0.62]{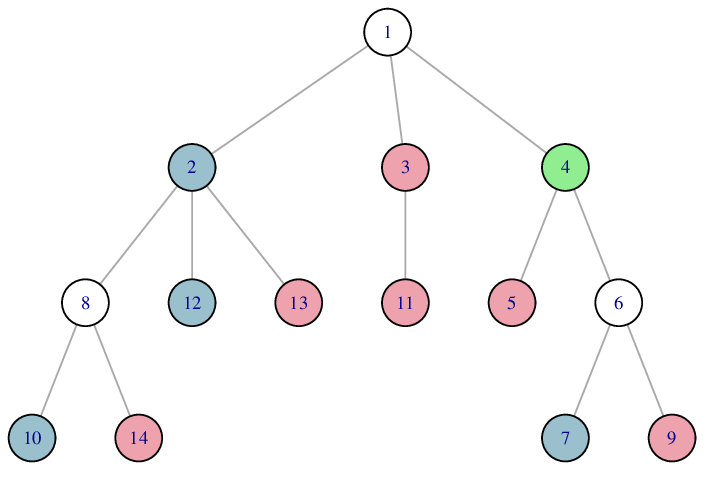}
\caption{Tree $T_2$ after performing Phase~1 of the algorithm for $k=3$}\label{Ex3}
\end{subfigure}
\begin{subfigure}[t]{0.49\textwidth}
\centering
\includegraphics[scale=0.62]{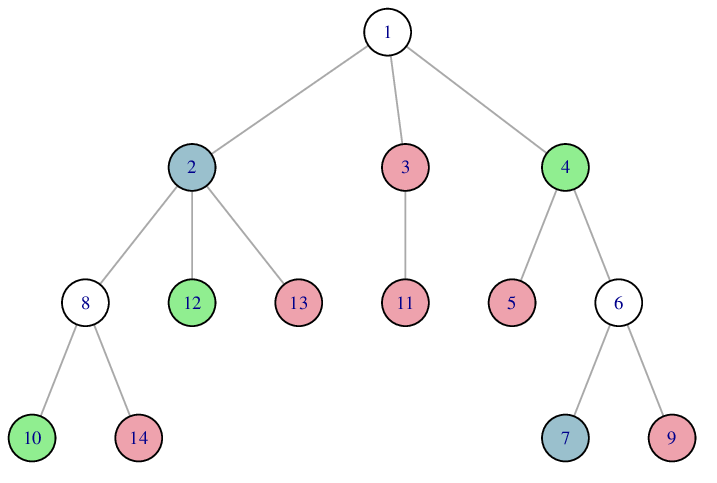}
\caption{Tree $T_2$ after performing Phase~2 of the algorithm for $k=3$}\label{Ex4}
\end{subfigure}
\caption{Linear algorithm for trees, situations~\ref{s3} and~\ref{s4}. Each color set is visualized using a unique vertex color, and white corresponds to the empty set.}
\end{figure}

Each node $v$ stores the following variables:
\begin{itemize}
    \item A vector $v.N$ of size $k$. If $v.N[i]>0$, then at least one child of $v$ is of color $i$. It is initialized with zeroes.
    \item A color set $\{c\}$ assigned to a node is stored as an integer in $v.c$. If $v.c=0$, then $v$ is assigned an empty set.
    \item Number $v.r$ stores the integer representation of a color set that is missing in the neighborhood of a child (or children) with an empty set assigned, and therefore $v$ should be assigned the color set $\{r\}$. 
    \item A vector $v.\pi$ of size $k$ which represents a permutation. It is initialized with consecutive values from ~1 to $k$, that is the identity permutation. 
    %We use the standard right-to-left convention for composition of permutations. Thus, for permutations $v.\pi$ and $u.\pi$, $v.\pi \circ u.\pi = v.\pi(u.\pi)$.
\end{itemize}
We assume that at the start of the algorithm for each node $v$, $v.r=0$ and $v.N$ is a zero vector.
We also assume that if $F(v)=null$, then all conditions in the {\tt if} statements evaluate to false and all statements containing $F(v)$ are not evaluated.

The proof of the correctness of the algorithm finding a minimum rainbow $k$-dominating set is based on the idea of dynamic programming and mathematical induction. We prove that after processing each vertex $v$ of the input tree the sub-tree root at this node has a valid and optimal assignment of color providing recoloring is done in Phase~2 and $F(v)$ gets a demanded by $v$ color (situation~\ref{s2}).

\SetKwComment{tcp}{$\triangleright$\ }{}
\begin{algorithm}
\SetAlgoNoEnd
\SetAlgoVlined
\KwData{A tree $T$ with a tree-order}
\KwResult{$T$ with optimal assignment of colors to vertices}
\SetKwFunction{FMain}{{\tt Rainbow $k$}}
\SetKwProg{Fn}{Function}{:}{uiuyt}
    \Fn{\FMain{$T$}}{
\For(\tcp*[f]{Phase 1}){$v\in V(T)$ in reverse tree order}{
    \eIf{$v.r >0$}{
        \If{$F(v).N[v.r]>0$ and the $i$-th entry of $F(v).N$ is zero}{
            swap $v.\pi[v.r]$ and $v.\pi[i]$\;
            $v.r\gets i$\;
        }
        $v.c\gets v.r$\;
        $F(v).N[v.r]\gets F(v).N[v.r] + 1$\;
    }{
    \If{$v.N$ at least two zeroes, say $v.N[i]=0$}{
        \lIf{$F(v).N$ has a zero entry at position $j$}{$i\gets j$}
        $v.c\gets i$\;
        $F(v).N[i]\gets F(v).N[i] + 1$\;
        }
    \ElseIf{$v.N$ has exactly one zero entry, say $v.N[i]=0$}{
        \If{$F(v).r>0$ and $F(v).r\neq i$}{
            swap $v.\pi[F(v).r]$ and $v.\pi[i]$\;
            $i\gets F(v).r$\;
        }
    $F(v).r\gets i$\;
    }
    }
}%FOR
\lIf{$root.c=\emptyset$ and for some $i$, $root.N[i]=0$}{$root.c\gets i$}
\For(\tcp*[f]{Phase 2}){$v\in V(T)$ in tree order}{
    \lIf{$v.c \neq \emptyset$ and $v.c\neq F(v).\pi[v.c]$}{$v.c\gets F(v).\pi[v.c]$}
    $v.\pi\gets F(v).\pi(v.\pi)$\;
} %FOR
    \KwRet{$T$} 
 } % Main
 \caption{{\tt Rainbow $k$}-dominating function}\label{RanibowF1}
\end{algorithm}

\begin{theorem}\label{thmalgtrees}
    The algorithm {\tt Rainbow $k$-dominating function}, (see Algorithm~\ref{RanibowF1}), finds a minimum rainbow $k$-dominating set in a linear time.
\end{theorem}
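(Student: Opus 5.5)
We prove the two claims separately: running time and optimality. For the running time, Phase~1 visits each vertex once in reverse tree order. The work per vertex is a constant number of updates to $F(v).N$ and $F(v).r$, one or two swaps in $v.\pi$, and scans of the vectors $v.N$ and $F(v).N$ of length $k$ to locate zero entries. This costs $O(k)$ per vertex. Phase~2 likewise visits each vertex once and composes two permutations of size $k$, again $O(k)$. Hence the total is $O(nk)$, which is linear in $n$ for fixed $k$.

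For correctness we argue by induction along the reverse tree order, following the invariant announced before the algorithm. For the subtree $T_v$ rooted at $v$ we define three auxiliary quantities:
\begin{itemize}
\item $a(v)$, the minimum weight of a rainbow $k$-dominating function of $T_v$;
\item $b(v)$, the minimum weight of a function on $T_v$ that is valid everywhere except possibly at $v$, where $v$ has $f(v)=\emptyset$ and is missing exactly one colour among its children (so the parent must supply it);
\item $c(v)$, the analogous minimum when $v$ is forced to take a prescribed colour.
\end{itemize}

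The key claims, proved by induction from the leaves, are the following.
\begin{enumerate}
\item Up to permutation of colours inside $T_v$, the weight assigned by Phase~1 to $T_v$ equals the minimum of these quantities that is achievable given the parent's future behaviour. In particular, a leaf always gets a colour or a demand, and a vertex with $\ge 2$ missing colours among its children must be coloured; an exchange argument shows that colouring $v$ is never worse than colouring additional descendants.
\item Each vertex makes the choice that is locally optimal. A vertex with exactly one missing colour among its children takes $\emptyset$ and places a demand on its parent (situation~\ref{s2}). A vertex seeing all $k$ colours among its children takes $\emptyset$ (situation~\ref{s3}). A demanded vertex is coloured with the demanded colour (situation~\ref{s4}).
\item The colour symmetry, namely that swapping colour classes inside a subtree preserves validity and weight, justifies each swap performed in $v.\pi$. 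Because of this, a demand or a colour choice can always be redirected to a colour still missing at the parent without changing the weight of $T_v$.
\end{enumerate}

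Next we verify that Phase~2 realizes these redirections. We show by induction in tree order that after processing $v$, the accumulated permutation $v.\pi$ equals the composition of all swaps recorded on the root-to-$v$ path. Consequently, every subtree $T_u$ is recoloured by a single permutation applied consistently to all its vertices. This preserves the internal rainbow condition in each subtree, and it makes each child's colour coincide with the colour its parent needed. The root is handled by the final line of Phase~1, which colours the root if it is still missing a colour.

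For optimality we compare with an arbitrary minimum rainbow $k$-dominating function $g$. Proceeding bottom-up, we transform $g$ into the output without increasing its weight, using at each vertex $v$ one of two moves:
\begin{itemize}
\item if $g$ colours more vertices in $C(v)\cup\{v\}$ than the algorithm does, we shift a colour upward to $v$ or $F(v)$;
\item we permute colours within $T_v$.
\end{itemize}

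\textbf{Main obstacle.} We expect the main difficulty to be the interaction of sibling demands. Several children may demand the same colour of one parent, and siblings processed later are forced to take a different missing colour, which is recorded by a swap. We must show that the greedy choice of "first zero entry of $F(v).N$" never leads to a situation where the parent ends up missing a colour that a different sibling ordering would have supplied for free. We would attack this first by proving the following lemma: for a fixed parent, the multiset of colour classes available to it depends only on the number of coloured children and not on their identities, since each child's colour can be permuted arbitrarily. This reduces the sibling interaction to a counting argument.
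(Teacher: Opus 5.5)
Your plan is sound and follows the paper's skeleton. That skeleton is induction in reverse tree order, with the invariant that each processed $T_v$ is optimally coloured (conditionally on $F(v)$ supplying a demanded colour), a case split following the branches of Phase~1, colour symmetry to justify the swaps, and Phase~2 as propagation of composed permutations. Where you differ is in how optimality is certified. The paper relies on the invariant alone, asserting in each branch that the local choice is compatible with an optimal colouring of $T_v$. You add the quantities $a(v),b(v),c(v)$ and a bottom-up exchange argument that turns an arbitrary minimum function $g$ into the output. The paper's route is shorter and mirrors the pseudocode. However, it never explains why leaving $v$ empty, or letting it demand from $F(v)$, cannot deprive the parent of a colour that some other optimal colouring of $T_v$ would supply for free. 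Your exchange argument is what settles that point, and your $O(nk)$ count is more precise than the paper's unqualified claim of linear time.

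The obstacle you single out is not the hard part. Each coloured child's subtree can be permuted independently, so a parent sees $\min\{k,\text{number of coloured children}\}$ colours whatever the processing order. What must actually be ruled out is a different optimal configuration of some child's subtree in which that child is coloured at no extra cost. This is where the exchange step does its work, and you should run it under the invariant that $g$ already agrees with the output, up to colour permutations, on $T_u$ for every $u\in C(v)$. Two facts then need to be shown.

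(i) $g$ cannot leave $v$ empty when the algorithm colours it. Either an empty child of $v$ sees only $k-1$ colours below it and needs $v$, or $v$ sees at most $k-1$ colours in total. The latter count is at most $k-2$ from its children plus one from $F(v)$, or at most $k-1$ from its children if $v$ is the root.

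(ii) Suppose $g$ colours $v$ but the algorithm does not. Then move that colour to $F(v)$, or delete it when $F(v)$ is already coloured or $v$ is the root, and suitably permute the subtrees of the children of $v$. This keeps $g$ valid without increasing its weight, because $v$ has no demanding child and an empty $F(v)$ supported no neighbour.

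In the language of your claim~1, these amount to three facts: $c(v)=a(v)\le b(v)$ when the algorithm colours $v$, $b(v)=a(v)-1$ when $v$ demands, and $c(v)=a(v)+1$ when $v$ is left empty and satisfied. These are what your induction actually has to carry.
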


\begin{proof}
    It is no problem to see that the data size grows linearly with the size of the input. Also, since the algorithm contains two independent {\tt for} loops that go through each node of the input tree once, the computation complexity is also linear. Therefore, it is enough to prove that the algorithm finds a minimum rainbow $k$-dominating set.

    If $v$ is a leaf, then $v.r=0$ and $v.N$ is a zero vector. Hence, lines 10--13 are performed. $v$ gets a color, preferably not used by other children of $F(v)$. In this way a one-node tree is optimally colored, which constitutes the base case.
    
    Denote by $T_v$ a subtree of $T$ rooted in $v$ containing $v$ and all its children. We maintain the invariant that, at any step of the main loop of Phase~1 of the algorithm, the subtree $T_v$ colored so far admits an optimal coloring if $F(v).r=0$, up to a potential recoloring performed in Phase~2, and if $F(v).r\neq 0$, then $T_v$ has an optimal coloring providing $F(v)$ gets color $F(v).r$. We prove that during performing the algorithm, the subtree expands and the invariant continues to hold.

    Let $v$ be a node currently processed by the main loop. Assume that the invariant holds for all children $u$ of $v$. Thus each subtree $T_u$ is optimally processed and each child $u$ stores a correct state. 

    If $v.r>0$, then at least one child of $v$ is missing color $v.r$ and lines 2--8 are performed. If color $v.r$ is already used by another child of $F(v)$ and there is a free color, colors are switched (lines 4--6), so that $v$ gets optimal color. Note that in case of recoloring, $v$ gets the target color and the information of recoloring of the nodes in $V(T_v)-\{v\}$ is stored in $v$. In Phase~2, by composition of permutations, this information is propagated down the subtree all the way to the leaves and, if needed, recoloring takes place. 

    Hence assume $v.r=0$. Then no color is required at $v$ by any of its children. Assume further that at least two color sets are lacking among the children of $v$. Then $v$ cannot be assigned an empty set and at least two entrances of $v.N$ are equal zero. In this case lines 10--13 are performed. To find an optimal color for $v$, the algorithm finds an index $i$ such that $F(v).N[i]=0$ and assigns color $i$ to $v$. If $F(v)$ does not have a zero entry, then $v$ gets any color. 

    Now assume that $v.r=0$ and $v.N$ is equal zero at exactly one entry, say $i$. Then $v$ may be assigned an empty set, providing its father gets the color set $\{i\}$. If $v$ is not the root, lines~14--18 are performed. If the father of $v$ already is required to be assigned a color set different than $\{i\}$ by another child, then the algorithm recolors $T_v$ (lines 15--17 and Phase~2), similarly as in the previous case. If $v$ is the root, then it gets a proper color set in line~19. 

    At last, if $v.r=0$ and $v.N$ does not have any zero entries, then $v$ is assigned an empty set, which is the default value for $v.c$. 

    In all cases, since the invariant holds for all children $u$ of $v$, we conclude that it is also true for $T_v$, which finishes the proof.
\end{proof}

\subsection*{Experimental evaluation }
To assess the practical applicability of the proposed algorithm, we developed a C++ implementation together with an automated testing and benchmarking suite. The implementation was used for both correctness verification and empirical runtime evaluation.

\subsubsection*{Correctness testing}
To verify the correctness of the implementation, we developed an automated test suite. The test procedure generates random rooted trees and runs the algorithm on those instances. For each computed solution, we verify that the resulting set is a valid singleton $k$-rainbow dominating set. Moreover, for small tree instances we compute the optimal solution using an exhaustive search procedure and compare it with the result produced by the algorithm.

The implementation successfully passed all tests, including thousands of randomly generated instances and all exhaustive verification cases for trees up to the size $n = 13$.

\subsubsection*{Runtime testing}
To empirically evaluate the time complexity, we conducted two sets of runtime experiments: one varying the number of vertices $n$ while keeping $k$ fixed, and another varying the number of colors $k$ while keeping $n$ fixed. 

In the first experiment, the algorithm was tested on randomly generated rooted trees of increasing order. In the second experiment, designed to assess the dependence on $k$, the algorithm was evaluated on rooted trees constructed to contain many high-degree vertices.

For each tested value of $n$ or $k$, the algorithm was executed multiple times and the median running time was recorded. In the first experiment (see Fig.~\ref{fig:runtime}), the number of vertices $n$ ranged from $5{,}000$ to $500{,}000$, while $k$ was kept fixed at $3$. In the second experiment (see Fig.~\ref{fig:runtime_k}), $n$ was fixed at $1{,}000{,}000$ and $k$ ranged from $1$ to $40$.

\begin{figure}[htbp]
\centering
\includegraphics[width=0.8\linewidth]{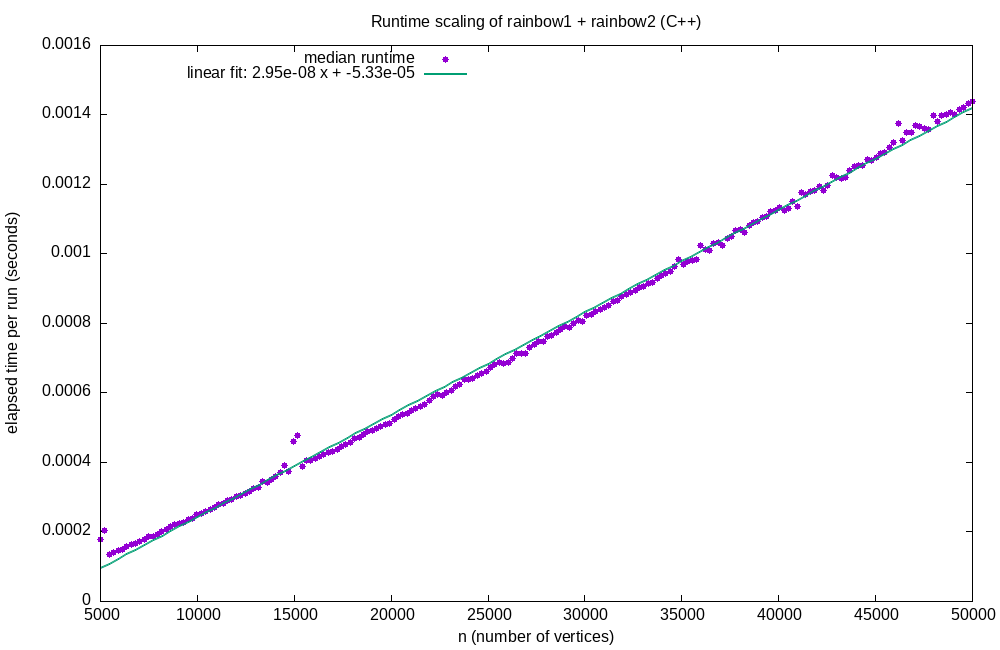}
\caption{Median running time of the C++ implementation of the proposed algorithm on randomly generated rooted trees of increasing order. The fitted line illustrates the approximately linear growth of the running time with respect to the number of vertices.}
\label{fig:runtime}
\end{figure}

\begin{figure}[htbp]
\centering
\includegraphics[width=0.8\linewidth]{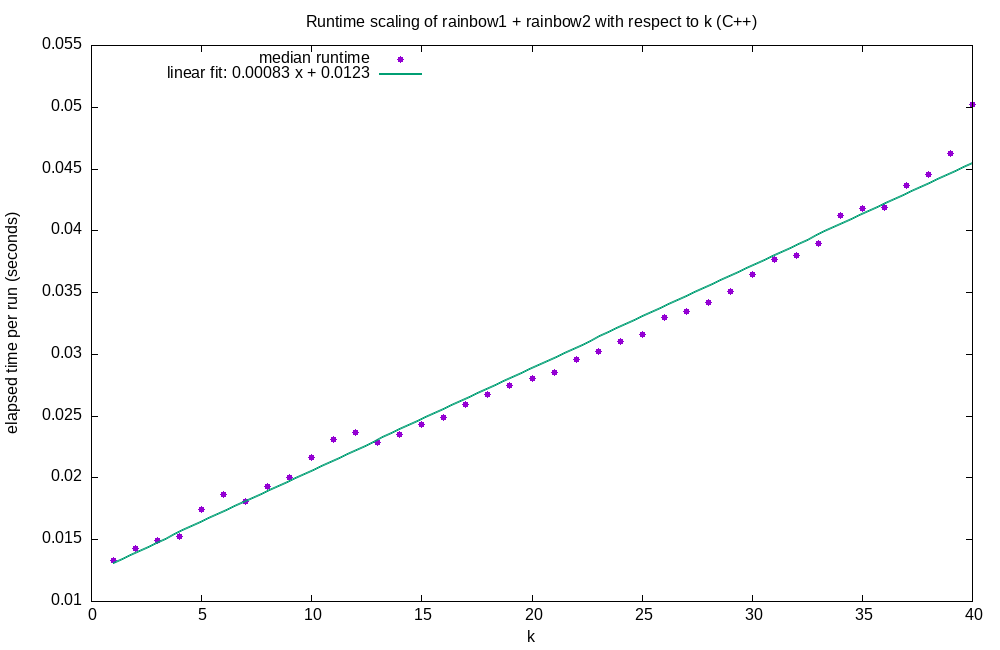}
\caption{Median running time of the C++ implementation as a function of the number of colors $k$ for a fixed number of vertices. The fitted line indicates approximately linear growth with respect to $k$.}
\label{fig:runtime_k}
\end{figure}

The results presented in Fig.~\ref{fig:runtime} confirm that the running time grows approximately linearly with the number of vertices. Similarly, Fig.~\ref{fig:runtime_k} shows that the running time scales approximately linearly with respect to the number of colors $k$. These observations are consistent with the theoretical time complexity of the algorithm, which is linear in both $n$ and $k$. Minor deviations from the linear trend can be observed in the measurements. These variations are typical in empirical runtime experiments and are primarily caused by factors unrelated to the algorithm itself, such as operating system scheduling, processor frequency scaling, cache behavior, and other hardware-level effects.

%\subsection*{Details of experimental setup}

\subsubsection*{Implementation}
The algorithm was implemented in C++. The input tree is represented by a parent array, where for each vertex $v$ the value $parent[v]$ stores the father of $v$ in the rooted tree. This representation provides constant-time access to the parent of each vertex and allows both phases of the algorithm to be performed by simple linear scans of the vertex set.

For each vertex, the implementation stores:
(i) its assigned color,
(ii) a temporary requirement value propagated from descendants,
(iii) a vector recording which colors are already represented in its processed neighborhood, and
(iv) a recoloring permutation used during the second phase of the algorithm.

The implementation used in the experiments corresponds directly to the pseudocode presented in Algorithm~\ref{RanibowF1}.

\subsubsection*{Correctness verification procedure}
Correctness was evaluated in two complementary ways.

First, for randomly generated rooted trees, we verified that the output of the algorithm is a valid rainbow $k$-dominating function. In particular, for every vertex assigned the empty set, we checked whether its neighborhood contains all required color sets.

Second, for small instances, we compared the output of the algorithm with an optimal solution obtained by exhaustive search. The brute-force procedure enumerates all possible singleton color assignments and selects a feasible assignment of minimum weight. This verification was applied to trees up to order $n=13$, for which exhaustive search remained computationally feasible.

Together, these two testing procedures verified both feasibility and optimality of the implementation.
\subsubsection*{Runtime evaluation procedure}
Runtime performance was evaluated in two complementary experimental settings.

In the first setting, we investigated the dependence of the running time on the number of vertices $n$, while keeping the number of colors fixed. Trees were generated as uniformly random rooted trees by assigning to each vertex $v_i$, $i>1$, a parent selected uniformly at random from $\{v_1, \dots, v_{i-1}\}$. For each value of $n$, the algorithm was executed multiple times and the median running time was recorded.

In the second setting, we evaluated the dependence of the running time on the number of colors $k$, while keeping the number of vertices fixed. In order to ensure that the influence of $k$ is observable in practice, we generated trees containing a large number of high-degree vertices.

All experiments were repeated multiple times, and median execution times were used to reduce the impact of system noise and transient performance fluctuations. The measurements were performed using a high-resolution steady clock, and each run consisted of a single execution of the algorithm on a freshly generated instance.

\subsubsection*{Hardware environment}
Experiments were conducted on a machine equipped with an Intel Core Ultra 5 125U processor, 16GB of RAM, running Debian 13. The code was compiled using \texttt{g++} version~14.2.0 with optimization flag \texttt{-O3}.

\section{Structural Results for Roman and Rainbow $k$-Domination} \label{sec:structural}
This subsection is devoted to developing structural insights into the nature of rainbow $k$-domination and the constraints imposed by the underlying graph structure. Next, we investigate the structural properties of graphs for which the rainbow 2-domination number is equal to $\frac 32$ times the Roman domination number. Finally, we give computational complexity results comparing the  rainbow 2-domination and Roman domination numbers of bipartite graphs.

\subsection{Lower and Upper Bounds on Rainbow $k$-Domination} 
Since adding an edge does not increase the rainbow $k$-domination  number, one possible approach to find an upper bound of the rainbow $k$-domination number of a graph $G$ is to identify a spanning tree of $G$ and apply the algorithm for trees. In this subsection, we first derive lower bounds for general graphs in terms of graph-theoretic parameters and the number of colors $k$. To complement these results, we also establish an alternative upper bound for general graphs, which may improve the bound obtained from a spanning tree.

Since any rainbow $k$-dominating set is also a $k$-dominating set, for every graph $G$ and every integer $k\geq 1$ it holds that
\[
\gamma_{k}(G)\leq\tilde{\gamma}_{k}(G),
\]
where $\gamma_{k}(G)$ is the $k$-domination number of $G$ \cite{Zerovnik2026}.

Using this property, the following lower bounds are derived from known lower bounds on $k$-domination.

\begin{corollary}
\label{cor:PL1}
Let $G=(V,E)$ be a graph with $n=|V|$ vertices, $m=|E|$ edges and maximum degree $\Delta$, and let $k\ge 2$. Then,
\[
\tilde{\gamma}_{k}(G)\ \ge\ n-\left\lfloor \frac{m}{k}\right\rfloor \qquad \text{and} \qquad \tilde{\gamma}_{k}(G)\ \ge\ \left\lceil \frac{k\,n}{k+\Delta} \right\rceil.
\]
In particular, if $T$ is a tree with $n$ vertices, then
\[
n-\left\lfloor\frac{n-1}k\right\rfloor \leq \tilde{\gamma}_{k}(T)\leq n.
\]
\end{corollary}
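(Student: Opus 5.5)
The plan is to use the inequality $\gamma_k(G)\le\tilde\gamma_k(G)$ established just before the statement. Each lower bound then only needs to hold for $\gamma_k(G)$. To keep the argument self-contained, I would reprove the two classical $k$-domination bounds directly by double counting edges between a $k$-dominating set $D$ and its complement $V\setminus D$.

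Let $D$ be a minimum $k$-dominating set, and let $e(D,V\setminus D)$ denote the number of edges with one end in $D$ and the other in $V\setminus D$. Every vertex of $V\setminus D$ has at least $k$ neighbours in $D$, so $e(D,V\setminus D)\ge k(n-|D|)$. For the first bound, $e(D,V\setminus D)\le m$, so $k(n-|D|)\le m$. Since $n-|D|$ is an integer, this gives $n-|D|\le\lfloor m/k\rfloor$, that is, $\tilde\gamma_k(G)\ge\gamma_k(G)\ge n-\lfloor m/k\rfloor$. For the second bound, each vertex of $D$ has at most $\Delta$ neighbours, so $e(D,V\setminus D)\le\Delta|D|$. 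Combining the two counts gives $k(n-|D|)\le\Delta|D|$, hence $|D|(k+\Delta)\ge kn$. Because $|D|$ is an integer, $|D|\ge\lceil kn/(k+\Delta)\rceil$. The hypothesis $k\ge2$ is not needed for either count; it only matches the setting of the paper.

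For trees, substitute $m=n-1$ into the first bound to obtain $\tilde\gamma_k(T)\ge n-\lfloor (n-1)/k\rfloor$. The upper bound $\tilde\gamma_k(T)\le n$ is trivial. Assigning every vertex a singleton color set (for example $\{1\}$) leaves no vertex with $f(v)=\emptyset$. Such an assignment is vacuously a rainbow $k$-dominating function of weight $n$.

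I do not expect a real obstacle here. The only care needed is to make the integrality step explicit, so that passing from $k(n-|D|)\le m$ to the floor, and from $|D|(k+\Delta)\ge kn$ to the ceiling, is justified. One should also check that $\gamma_k\le\tilde\gamma_k$ is applied in the correct direction. It holds because the vertices with nonempty color sets form a $k$-dominating set: every uncolored vertex sees $k$ distinct colors, and with singleton color sets these must come from $k$ distinct neighbours.
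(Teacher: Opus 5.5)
Your proposal is correct and follows the paper's route: use $\gamma_k(G)\le\tilde\gamma_k(G)$, apply the two Fink--Jacobson lower bounds on $\gamma_k$ with integer rounding, substitute $m=n-1$ for trees, and color every vertex for the upper bound. The only difference is that you reprove the Fink--Jacobson bounds by double counting the edges between $D$ and $V\setminus D$ rather than citing them, which makes the argument self-contained, and your justification of $\gamma_k\le\tilde\gamma_k$ from the singleton constraint is right.
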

\begin{proof}
Fink and Jacobson \cite{Fink1985} proved that for every integer $k\geq1$ 
\[
\gamma_{k}(G)\ \ge\ n- \frac{m}{k} \qquad \text{and} \qquad \gamma_{k}(G)\ \ge\  \frac{k\,n}{k+\Delta}.
\]
Using the general inequality between the rainbow $k$-domination number and the $k$-domination number and accounting for integrality gives the corresponding bounds on rainbow $k$-domination. The lower bound in the special tree case follows from the fact that every tree has $m=n-1$ edges, while the upper bound follows trivially by coloring every vertex.
\end{proof}

The bound involving the maximum degree can be strengthened by recognizing that every set of vertices assigned a particular color must dominate the entire uncolored set independently.

\begin{theorem}
\label{thm:PL1S}
Let $k\geq1$ and let $G$ be a graph of order $n$ and maximum degree $\Delta\geq k$. Then
\[
\tilde{\gamma}_k(G)\geq kq+\min\{k,r\},
\]
where
\[
n=q(k+\Delta)+r,\qquad 0\leq r<k+\Delta.
\]
\end{theorem}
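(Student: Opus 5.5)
The plan is to fix a minimum rainbow $k$-dominating function $f$ and count the uncolored vertices against the color classes. Let $s=\tilde{\gamma}_k(G)=\sum_{i=1}^k |V_{\{i\}}|$, so that $|V_\emptyset|=n-s$. The key observation is that for each fixed color $i$, every vertex of $V_\emptyset$ has a neighbor in $V_{\{i\}}$. Hence $V_{\{i\}}$ dominates $V_\emptyset$, and since each vertex has at most $\Delta$ neighbors, $|V_\emptyset|\le \Delta\,|V_{\{i\}}|$ for every $i$.

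Applying this to the smallest color class gives $n-s\le \Delta\min_i|V_{\{i\}}|$. Since the classes are disjoint, the smallest one has size at most $\lfloor s/k\rfloor$. This remains valid when $V_\emptyset=\emptyset$, because then the left side is $n-s=0$.

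Next, write $s=kt+u$ with $0\le u<k$, so that $\min_i|V_{\{i\}}|\le t$. The inequality becomes
\[
n\;\le\; kt+u+\Delta t \;=\; t(k+\Delta)+u .
\]
It then remains to compare this with $n=q(k+\Delta)+r$, $0\le r<k+\Delta$, by cases on $t$.
\begin{itemize}
\item If $t\ge q+1$, then directly $s\ge k(q+1)\ge kq+\min\{k,r\}$.
\item If $t\le q-1$, then $n\le (q-1)(k+\Delta)+u<(q-1)(k+\Delta)+k+\Delta=q(k+\Delta)\le n$. This is a contradiction, so the case cannot occur.
\item If $t=q$, the inequality reads $q(k+\Delta)+r\le q(k+\Delta)+u$, i.e.\ $u\ge r$. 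Since $u<k$, this forces $r<k$, so $\min\{k,r\}=r$ and $s=kq+u\ge kq+r=kq+\min\{k,r\}$.
\end{itemize}
In every possible case the claimed bound holds.

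The only step that needs care is the integrality bookkeeping. One must bound the smallest class by $\lfloor s/k\rfloor$ rather than by $s/k$, and then handle the case split on the quotient $t$ versus $q$. That refinement is exactly what improves the bound $\lceil kn/(k+\Delta)\rceil$ of Corollary~\ref{cor:PL1}. The hypothesis $\Delta\ge k$ does not seem to be needed in this argument.
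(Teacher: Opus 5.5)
Your proof is correct and follows the paper's argument essentially step for step. It bounds $|V_\emptyset|$ by $\Delta$ times the smallest color class, uses $\min_i|V_{\{i\}}|\le\lfloor s/k\rfloor$, writes $s=kt+u$, and splits into the cases $t<q$, $t=q$, $t>q$. Your remark that the hypothesis $\Delta\ge k$ is never used is also right --- the paper's proof does not use it either, and when $\Delta<k$ one has $V_\emptyset=\emptyset$, so $s=n\ge kq+r$ trivially.
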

\begin{proof}
Let $f$ be a minimum rainbow $k$-dominating function of $G$. For $i\in\{1,\ldots,k\}$, let $C_i=\{v\in V(G):f(v)=\{i\}\}$, and let $U=\{v\in V(G):f(v)=\emptyset\}$. Then $\tilde{\gamma}_k(G)=\sum_{i=1}^k |C_i|$.

From sets $C_1,C_2,\dots,C_k$, choose one with minimum cardinality. Then $|C_j|\leq\left\lfloor\frac{\tilde{\gamma}_k(G)}{k}\right\rfloor$. Since every vertex of $U$ is adjacent to a vertex of $C_j$ and every vertex of $C_j$ has degree at most $\Delta$, it follows that $|U|\leq \Delta |C_j|\leq \Delta\left\lfloor\frac{\tilde{\gamma}_k(G)}{k}\right\rfloor$. Substituting $|U|=n-\tilde{\gamma}_k(G)$, we obtain
\[
n-\tilde{\gamma}_k(G)\leq\Delta\left\lfloor\frac{\tilde{\gamma}_k(G)}{k}\right\rfloor.
\]

Write $\tilde{\gamma}_k(G)=ak+b$ where $0\leq b<k$. The preceding inequality gives $n\leq a(k+\Delta)+b$. Let $0\leq r<k+\Delta$ and $n=q(k+\Delta)+r$. Suppose that $a<q$. Then 
\[
a(k+\Delta)+b\leq(q-1)(k+\Delta)+k-1<q(k+\Delta)\leq n.
\]
Contradiction. If $a=q$, the inequality $n\leq a(k+\Delta)+b$ implies $b\geq r$ and hence $r<k$. Therefore, in this case $\tilde{\gamma}_k(G)=qk+b\geq qk+r$. If $a>q$, then $\tilde{\gamma}_k(G)=ak+b\geq k(q+1)=kq+k$. Therefore, in both possible cases $\tilde{\gamma}_k(G)\geq kq+\min\{k,r\}$.
\end{proof}

We now present a complementary upper bound.
\begin{theorem}
\label{thm:probabilistic-upper-bound}
Let $k\geq 2$ and let $G$ be a graph of order $n$ and minimum degree $\delta$. Then
\[
\widetilde{\gamma}_k(G)
\leq
n\min\left\{1, \frac{k\bigl(\ln(\delta+k)+1\bigr)}{\delta+k}\right\}.
\]
\end{theorem}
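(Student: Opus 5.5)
The bound $n$ is trivial, since assigning a color to every vertex is always a rainbow $k$-dominating function. So it suffices to prove $\widetilde{\gamma}_k(G)\le n\,k(\ln(\delta+k)+1)/(\delta+k)$. I will use the standard probabilistic method (as in the Alon–Spencer bound $\gamma(G)\le n(1+\ln(\delta+1))/(\delta+1)$), adapted to $k$ colors.

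\emph{Step 1: random coloring.} Fix $p\in[0,1/k]$. Independently for each vertex $v$, give $v$ color $i$ with probability $p$ for each $i=1,\dots,k$, and give it no color with probability $1-kp$. Let $S$ be the set of colored vertices, so $\mathbb{E}|S|=knp$.

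\emph{Step 2: repair.} Let $B$ be the set of uncolored vertices whose closed neighborhood misses at least one color. For each $v\in B$, assign $v$ an arbitrary color. This yields a rainbow $k$-dominating function: a vertex that remains uncolored was not in $B$, and its neighbors' colors are unchanged because repair only colors previously uncolored vertices. The resulting weight is $|S|+|B|$.

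\emph{Step 3: bound $\Pr[v\in B]$.} The vertex $v$ must be uncolored, and some color $i$ must be absent from its at least $\delta$ neighbors. A neighbor avoids color $i$ with probability $1-p$, so by the union bound over the $k$ colors,
\[
\Pr[v\in B]\le (1-kp)\,k(1-p)^{\delta}\le k\,e^{-kp}e^{-p\delta}=k\,e^{-p(\delta+k)},
\]
using $1-x\le e^{-x}$ twice.

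\emph{Step 4: optimize.} Combining the steps,
\[
\mathbb{E}[|S|+|B|]\le n\bigl(kp+k e^{-p(\delta+k)}\bigr).
\]
Choose $p=\ln(\delta+k)/(\delta+k)$. This gives $kp=k\ln(\delta+k)/(\delta+k)$ and $k e^{-p(\delta+k)}=k/(\delta+k)$, so the expectation is at most $nk(\ln(\delta+k)+1)/(\delta+k)$. Some outcome achieves at most the expectation, which proves the bound.

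\emph{Main obstacle: feasibility of $p$.} We need $kp\le 1$, that is, $k\ln(\delta+k)\le \delta+k$. When this fails, the claimed expression $k(\ln(\delta+k)+1)/(\delta+k)$ exceeds $1$. The minimum in the statement then equals $1$, and the trivial bound $n$ covers that case.

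The key point in Step 3 is merging the factor $(1-kp)$ with $(1-p)^\delta$ to obtain the exponent $\delta+k$ rather than $\delta$. This is what produces the exact form of the bound.
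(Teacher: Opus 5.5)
Your proof is correct and is essentially the paper's own argument. The paper uses the same independent random coloring (its parameter $p$ corresponds to your $kp$, with each color drawn with probability $p/k$), the same repair of the bad set $B$, the same union bound with $1-x\le e^{-x}$ merging the uncolored factor into the exponent $\delta+k$, the same parameter choice, and the same fallback to the trivial bound $n$. Your explicit check that an infeasible parameter forces $k(\ln(\delta+k)+1)/(\delta+k)>1$ makes the paper's final ``combining with the trivial bound'' step slightly more transparent.
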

\begin{proof}
Let $p \in [0, 1]$. Independently for each vertex $v$, define $f_0(v)$ by
\[
f_0(v) =
\begin{cases}
    \{i\}, & \text{with probability $p/k$ for each $i \in \{1,\dots,k\}$}, \\
    \emptyset, & \text{with probability $1-p$}.
\end{cases}
\]
Thus, the expected number of initially colored vertices is $pn$. Let $B$ be the set of vertices $v\in V(G)$ satisfying $f_0(v)=\emptyset$ and $\bigcup_{u\in N(v)}f_0(u)\neq\{1,\ldots,k\}$. For a fixed vertex $v$, the probability of a particular color being absent from its open neighborhood is $(1-p/k)^{d(v)}$. The union bound gives
\[
\Pr(v\in B)\leq k(1-p)\left(1-\frac{p}{k}\right)^{d(v)}\leq k(1-p)\left(1-\frac{p}{k}\right)^\delta.
\]

Change the value assigned to the vertices of $B$ from $\emptyset$ to $\{1\}$. The resulting function $f$ is a valid rainbow $k$-dominating function. Its expected weight is
\[
\mathbb E\, w(f) \leq pn+nk(1-p)\left(1-\frac{p}{k}\right)^\delta.
\]
Therefore, for every $p\in[0,1]$, there exists a realization of $f$ whose weight is at most the right-hand side. Taking the minimum over $p$ gives
\[
\widetilde{\gamma}_k(G)\leq n\min_{0\leq p\leq1}\left\{p+k(1-p)\left(1-\frac{p}{k}\right)^\delta\right\}.
\]

To obtain a closed-form bound, we use $(1-p)\left(1-\frac{p}{k}\right)^\delta \leq e^{-p(\delta+k)/k}$. If $p_0=k\ln(\delta+k)/(\delta+k) \leq 1$, taking $p=p_0$ gives
\[
\widetilde{\gamma}_k(G)\leq n\frac{k\bigl(\ln(\delta+k)+1\bigr)}{\delta+k}.
\]
Combining it with the trivial bound $\widetilde{\gamma}_k(G)\leq n$ yields the result.
\end{proof}

\subsection{Rainbow $2$-Domination and Roman Domination}
In this section, we study the mutual relations between the rainbow $2$-domination number and Roman domination number. It is known in the literature~\cite{FUJITA2013806, Chellali2013}, that the Roman domination number $\gamma_R$ of a given graph $G$ is bounded from above by $\frac 32\tilde\gamma_2(G)$. In this section we identify key structural dependencies of graphs for which the upper bound is attained with equality and show that it is NP-hard to decide, for a given graph, whether $\gamma_R(G)=\frac 32\tilde\gamma_2(G)$.

The concept of Roman domination was introduced by Cockayne, Dreyer, Hedetniemi and Hedetniemi in 2004~\cite{cockayne2004roman} as a modification of the classical domination problem. A Roman dominating function on a graph $G=(V,E)$ is a function $f:V\rightarrow\{0,1,2\}$ such that whenever $f(v)=0$ for a vertex $v$, then $v$ is adjacent to some vertex $w$ with $f(w)=2$. For practical notational convenience, we denote $V_i = \{ v \in V(G) : f(v) = i \} \text{ for } i \in \{0,1,2\}$. Therefore, we can write $f=(V_0, V_1, V_2)$. The value of a Roman domination function $f$ is defined to be $|f|=\sum_{v\in V(G)}f(v)$, or equivalently, $|f|=|V_1|+2|V_2|$. The objective is to minimize the total weight $w(f)=\sum_{v\in V}$. The minimum possible weight is called the Roman domination number of $G$ and is denoted by $\gamma_R(G)$. If $|f|=\gamma_R(G)$, then $f$ is said to be a $\gamma_R(G)$-function. Roman domination possesses numerous interesting structural properties, for example if $f$ is a $\gamma_R$-function, then no vertex in $V_1$ is adjacent to a vertex in $V_2$~\cite{cockayne2004roman}.

A graph $G$ is a $(\gamma_R,\frac 32\tilde\gamma_2)$-graph if and only if $\gamma_R(G)=\frac 32 \tilde\gamma_2(G)$. First we present some interesting structural results for $(\gamma_R,\frac 32\tilde\gamma_2)$-graphs.

Let $h$ be a minimum rainbow 2-dominating function of a graph $G$ with $2\gamma_R(G) = 3\tilde\gamma_2(G)$. Then a function $f$ which is obtained from $h$ in such a way that 
\[
f(v)=\begin{cases}
    0, & \mbox{for } h(v)=\emptyset,\\
    1, & \mbox{for } h(v)=\{1\},\\
    2, & \mbox{for } h(v)=\{2\}.
\end{cases}
\]
is a Roman dominating function of weight $\frac 32\tilde\gamma_2(G)$. Since $G$ is $(\gamma_R,\frac 32\tilde\gamma_2)$-graph, $f$ is a minimum Roman dominating function of $G$. 
We call such a function $f$ as \emph{induced Roman dominating function} (obtained from the rainbow 2-dominating function).

Note that if $G$ is a $(\gamma_R,\frac 32\tilde\gamma_2)$-graph, then each minimum rainbow 2-dominating set induces a Roman dominating function and each such function is of minimum weight. On the other hand, if $G$ is a $(\gamma_R,\frac 32\tilde\gamma_2)$-graph, there might exist minimum Roman dominating functions that are not induced by a minimum rainbow 2-dominating function, see Fig.~\ref{fig2}. 

\begin{figure}[h!]
\begin{subfigure}[t]{0.32\textwidth}
\begin{center}
\begin{tikzpicture}[scale=1, rotate=90]
\tikzstyle{vertex0}=[circle,fill=black!25,minimum size=16pt,inner sep=0pt]
\tikzstyle{vertex1}=[circle,fill=oi-lightblue!99,minimum size=16pt,inner sep=0pt]
\tikzstyle{vertex2}=[circle,fill=oi-orange!99,minimum size=16pt,inner sep=0pt]
  \tikzstyle{edge} = [draw, thick,-]
  
\node[vertex0] (x1) at (1,4) {0};
\node[vertex2] (x2) at (2,3) {2};
\node[vertex0] (x3) at (2,2) {0};
\node[vertex0] (x4) at (2,1) {0};
\node[vertex2] (x5) at (1,0) {2};
\node[vertex0] (x6) at (0,1) {0};
\node[vertex2] (x7) at (0,2) {2};
\node[vertex0] (x8) at (0,3) {0};

\path[edge] (x1)--(x2)--(x3)--(x4)--(x5)--(x6)--(x7)--(x8)--(x1);
\path[edge] (x3)--(x7);
\end{tikzpicture}
\caption{This $\gamma_R$-set is not induced by a $\tilde\gamma_2$-set}
\end{center}
\end{subfigure}
\hfill
\begin{subfigure}[t]{0.32\textwidth}
\begin{center}
\begin{tikzpicture}[scale=1, rotate=90]
\tikzstyle{vertex0}=[circle,fill=black!25,minimum size=16pt,inner sep=0pt]
\tikzstyle{vertex1}=[circle,fill=oi-lightblue!99,minimum size=16pt,inner sep=0pt]
\tikzstyle{vertex2}=[circle,fill=oi-orange!99,minimum size=16pt,inner sep=0pt]
  \tikzstyle{edge} = [draw, thick,-]
  
\node[vertex2] (x1) at (1,4) {2};
\node[vertex0] (x2) at (2,3) {0};
\node[vertex1] (x3) at (2,2) {1};
\node[vertex0] (x4) at (2,1) {0};
\node[vertex2] (x5) at (1,0) {2};
\node[vertex0] (x6) at (0,1) {0};
\node[vertex1] (x7) at (0,2) {1};
\node[vertex0] (x8) at (0,3) {0};

\path[edge] (x1)--(x2)--(x3)--(x4)--(x5)--(x6)--(x7)--(x8)--(x1);
\path[edge] (x3)--(x7);
\end{tikzpicture}
\caption{This $\gamma_R$-set is induced by a $\tilde\gamma_2$-set}
\end{center}
\end{subfigure}
\hfill
\begin{subfigure}[t]{0.32\textwidth}
\begin{center}
\begin{tikzpicture}[scale=1, rotate=90]
\tikzstyle{vertex0}=[circle,fill=black!25,minimum size=16pt,inner sep=0pt]
\tikzstyle{vertex1}=[circle,fill=oi-lightblue!99,minimum size=16pt,inner sep=0pt]
\tikzstyle{vertex2}=[circle,fill=oi-orange!99,minimum size=16pt,inner sep=0pt]
  \tikzstyle{edge} = [draw, thick,-]
  
\node[vertex1] (x1) at (1,4) {1};
\node[vertex0] (x2) at (2,3) {0};
\node[vertex2] (x3) at (2,2) {2};
\node[vertex0] (x4) at (2,1) {0};
\node[vertex1] (x5) at (1,0) {1};
\node[vertex0] (x6) at (0,1) {0};
\node[vertex2] (x7) at (0,2) {2};
\node[vertex0] (x8) at (0,3) {0};

\path[edge] (x1)--(x2)--(x3)--(x4)--(x5)--(x6)--(x7)--(x8)--(x1);
\path[edge] (x3)--(x7);
\end{tikzpicture}
\caption{Interchanging $V_1$ with $V_2$ in (b) results in another induced $\gamma_R$-set }
\end{center}
\end{subfigure}
\caption{Minimum Roman dominating functions of a $(\gamma_R,\frac 32\tilde\gamma_2)$-graph, where $\gamma_R(G)=6$ and $\tilde\gamma_2(G)=4$} \label{fig2}
\end{figure}

\begin{theorem}\label{t1}
Let $G$ be a graph with $\gamma_R(G)=\frac 32 \tilde\gamma_2(G)$. Then there exists a minimum Roman dominating function $f=(V_0,V_1,V_2)$ such that

\begin{enumerate}
    \renewcommand{\labelenumi}{(\roman{enumi})}
    \item $|V_1|=|V_2|$. \label{item:equal}
    \item $f'=(V_0,V_2,V_1)$ is also a minimum Roman dominating function. \label{item:swapped}
    \item If $v\in V_0$, then $v$ has a neighbor in $V_1$ and a neighbor in $V_2$. \label{item:neighbors}
\end{enumerate}

\end{theorem}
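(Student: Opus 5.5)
Take $h$ to be a minimum rainbow $2$-dominating function of $G$, with color classes $C_1=\{v:h(v)=\{1\}\}$, $C_2=\{v:h(v)=\{2\}\}$ and $U=\{v:h(v)=\emptyset\}$. Then $\tilde\gamma_2(G)=|C_1|+|C_2|$. Rename the colors so that $|C_1|\le|C_2|$. Let $f=(U,C_1,C_2)$ be the induced Roman dominating function described before the theorem. It is a Roman dominating function because every vertex of $U$ has a neighbor in $C_2$. Its weight is
\[
|C_1|+2|C_2|=\tilde\gamma_2(G)+|C_2| .
\]
The plan is to show that $f$ satisfies (i)--(iii).

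For (i), use the hypothesis $\gamma_R(G)=\frac32\tilde\gamma_2(G)$ together with $\gamma_R(G)\le|f|$. This gives
\[
\tfrac32\bigl(|C_1|+|C_2|\bigr)\le |C_1|+2|C_2| ,
\]
which is only $|C_1|\le|C_2|$ and has the wrong direction. So we also evaluate the swapped function $f'=(U,C_2,C_1)$. It is also a Roman dominating function, because every vertex of $U$ has a neighbor in $C_1$. Its weight is $2|C_1|+|C_2|$, and therefore
\[
\tfrac32\bigl(|C_1|+|C_2|\bigr)=\gamma_R(G)\le 2|C_1|+|C_2| ,
\]
which gives $|C_2|\le|C_1|$. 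Combined with $|C_1|\le|C_2|$, this yields $|C_1|=|C_2|$, which is (i).

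Once (i) holds, both $f$ and $f'$ have weight $\frac32\tilde\gamma_2(G)=\gamma_R(G)$. Hence both are minimum Roman dominating functions, which proves (ii) and also shows that $f$ itself is minimum. Statement (iii) follows directly from the rainbow condition: every vertex of $V_0=U$ has neighbors colored $1$ and $2$, that is, neighbors in $V_1=C_1$ and in $V_2=C_2$.

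The key step is recognizing that the swapped function $f'$ is also Roman dominating, since this is what forces the two classes to be equal. The remaining points are routine. The weight bound $\gamma_R\le\frac32\tilde\gamma_2$ quoted from the literature is consistent with the above, because the smaller of $|f|$ and $|f'|$ is at most $\frac32\tilde\gamma_2(G)$.
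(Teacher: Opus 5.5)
Your proof is correct and is essentially the paper's argument. The paper directly uses the Roman function that puts weight $2$ on the smaller color class, so $\gamma_R(G)\le \tilde\gamma_2(G)+\min\{|C_1|,|C_2|\}\le\frac32\tilde\gamma_2(G)$, and equality forces $|C_1|=|C_2|$; your swapped function $f'$ is exactly that function, so the preliminary look at the heavier $f$ is unnecessary but harmless, and you spell out (ii) and (iii) more explicitly than the paper does.
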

\begin{proof}
    Let $G$ be a graph with $\gamma_R(G)=\frac 32 \tilde\gamma_2(G)$ and let $h:V(G)\to\{\emptyset, \{1\}, \{2\}\}$ be a minimum rainbow 2-dominating function. Denote by $n_i$, $i=1,2$ the number of vertices $v$ such that $h(v)=\{i\}$. Without loss of generality we may assume that $n_2\leq n_1$. Then $n_2\leq \frac 12 \tilde\gamma_2(G)$ and since each node $z$ with $h(z)=\emptyset$ is adjacent to a node $v$ with $h(v)=\{2\}$,  $\gamma_R(G)\leq 2n_2+n_1=\tilde\gamma_2(G)+n_2\leq \frac 32\tilde\gamma_2(G)$.
    By the assumption, we have equalities in the last inequality chain. This implies that there exists a minimum Roman dominating function $f$ such that \textup{(\ref{item:equal})}, \textup{(\ref{item:swapped})}, and \textup{(\ref{item:neighbors})} follow.
\end{proof}

\begin{theorem}\label{t2}
    If $G$ is a $(\gamma_R,\frac 32\tilde\gamma_2)$-graph, then $\delta(G)\geq 2$.
\end{theorem}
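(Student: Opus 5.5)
The plan is to argue by contradiction, using the mechanism already present in the proof of Theorem~\ref{t1}. Let $h$ be a minimum rainbow $2$-dominating function of $G$, with color classes of sizes $n_1$ and $n_2$. The chain $\gamma_R(G)\le \tilde\gamma_2(G)+\min\{n_1,n_2\}\le \frac 32\tilde\gamma_2(G)$ holds with equality, so $n_1=n_2$. Consequently \emph{both} induced functions are minimum Roman dominating functions: the one sending class $\{1\}$ to value $2$, and the one sending class $\{2\}$ to value $2$. This means we may choose freely which color class receives value $2$. The strategy is to exhibit, under the assumption $\delta(G)\le 1$, a Roman dominating function of weight strictly less than $\frac32\tilde\gamma_2(G)$.

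First we handle an isolated vertex $v$. Since $v$ has no neighbors, $h(v)\neq\emptyset$. We orient the induced function so that $v$ gets value $2$. Because $v$ dominates no other vertex, lowering $f(v)$ from $2$ to $1$ keeps $f$ Roman dominating and decreases the weight by one, a contradiction.

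Next let $v$ be a leaf with unique neighbor $u$. Again $h(v)\neq\emptyset$, because a vertex of degree one cannot see two colors. We split into cases according to $h(u)$.
\begin{itemize}
\item If $h(u)\neq\emptyset$, we orient the induced function so that $u$ receives value $2$. Then $v$, whose value is $1$ or $2$, can be reset to $0$, since it is dominated by $u$. This saves at least one, a contradiction.
\item If $h(u)=\emptyset$, then $u$ has a neighbor $w\ne v$ whose color differs from $h(v)$. We orient the induced function so that $v$ has value $2$ and $w$ has value $1$. Now replace $f(v)=2$, $f(u)=0$ by $f(v)=0$, $f(u)=2$. This is still Roman dominating: $v$ is covered by $u$, and every vertex previously relying on $v$ was either $u$ or $v$ itself. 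The weight is unchanged, so the new function is a $\gamma_R$-function. In it the vertex $w\in V_1$ is adjacent to $u\in V_2$. Setting $f(w)=0$ then yields a strictly lighter Roman dominating function, a contradiction. Equivalently, this contradicts the cited property that in a $\gamma_R$-function no vertex of $V_1$ is adjacent to a vertex of $V_2$.
\end{itemize}

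The main obstacle is this last case ($h(u)=\emptyset$). A direct local modification of the induced function only gives equal weight, so the contradiction must come from the intermediate shift of value $2$ from $v$ to $u$, combined with the fact that $u$ has a differently colored neighbor $w$ that lands in $V_1$.
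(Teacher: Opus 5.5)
Your proof is correct and follows essentially the paper's argument. You use Theorem~\ref{t1} (equal class sizes) to orient the induced function so that the leaf gets value $2$, move that $2$ to its uncolored neighbor, and zero out the oppositely colored neighbor $w$, lowering the weight by one; the paper performs your two steps as a single modification, and your separate treatment of isolated vertices and of the case $h(u)\neq\emptyset$ makes explicit what the paper covers only by excluding $K_1,K_2$ and asserting $f(v_2)=0$.
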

\begin{proof}
    Let $G$ be a $(\gamma_R,\frac 32\tilde\gamma_2)$-graph. Then $G\notin\{K_1, K_2\}$, since for these graphs the Roman domination number and the rainbow 2-domination number are equal.

    Suppose $v_1\in V(G)$ is a vertex of degree~1 and let $v_2$ be the neighbor of $v_1$. Let $f =(V_0, V_1, V_2)$ be a minimum Roman dominating function induced by a rainbow 2-set. Then $f(v_1)\in \{1,2\}$. By Theorem~\ref{t1} we may assume that $f(v_1)=2$. Then $f(v_2)=0$ and there exists another neighbor of $v_2$, say $v_3$, such that $f(v_3)=1$. However in this case a Roman dominating function $f'$ such that $f'(v_1)=f'(v_3)=0$, $f'(v_2)=2$ and $f'(x)=f(x)$ for every other vertex of $G$ has a smaller weight than $f$, which is a contradiction.

\end{proof}

In particular, Theorem~\ref{t2} implies that no tree is a $(\gamma_R, \frac 32 \tilde\gamma_2)$-graph.

\begin{theorem}\label{t3}
    Let $G$ be a $(\gamma_R,\frac 32\tilde\gamma_2)$-graph and let $f=(V_0,V_1,V_2)$ be an induced minimum Roman dominating function. If $uv\in E(G)$ is a bridge, then $|\{u,v\}\cap V_0|\geq 1$. 
\end{theorem}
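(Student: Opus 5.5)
The plan is to argue by contradiction, combining two facts already available in the paper. First, in a $(\gamma_R,\frac 32\tilde\gamma_2)$-graph every minimum rainbow 2-dominating function $h$ induces a minimum Roman dominating function, as observed right after the definition of induced functions. Second, in any $\gamma_R$-function no vertex of $V_1$ is adjacent to a vertex of $V_2$~\cite{cockayne2004roman}. So suppose $uv$ is a bridge and $f$ is induced by a minimum rainbow 2-dominating function $h$ with $u,v\notin V_0$, i.e.\ both $h(u)$ and $h(v)$ are nonempty.

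\emph{Case 1: $h(u)\neq h(v)$.} Then $\{f(u),f(v)\}=\{1,2\}$. The minimum Roman dominating function $f$ would then have an edge between $V_1$ and $V_2$, which contradicts the quoted property.

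\emph{Case 2: $h(u)=h(v)$.} Here I use the bridge. Let $G_v$ be the component of $G-uv$ containing $v$. Define $h'$ by interchanging the colors $\{1\}$ and $\{2\}$ on all vertices of $G_v$ and keeping $h$ elsewhere. The key step is to check that $h'$ is still a rainbow 2-dominating function. Take a vertex $z$ with $h'(z)=\emptyset$.
\begin{itemize}
\item If $z\in G_v$, then $z\neq v$ because $v$ is colored. The only edge leaving $G_v$ is $uv$, so $N(z)\subseteq V(G_v)$. Its neighborhood had both colors under $h$, and the swap permutes them, so it still has both.
\item If $z\notin G_v$, then $z\neq u$, so again $N(z)$ lies entirely in the other component. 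There nothing was changed, so the condition still holds.
\end{itemize}
The swap preserves the total weight, so $h'$ is again a minimum rainbow 2-dominating function. By construction $h'(u)\neq h'(v)$. The Roman dominating function induced by $h'$ is therefore minimum, and Case 1 applied to it gives the contradiction.

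The only delicate point is the verification in Case 2 that recoloring one side of the bridge keeps every empty vertex rainbow-dominated. This is exactly where both hypotheses enter: we need that $uv$ is a bridge, and that neither $u$ nor $v$ is empty, since otherwise $u$ or $v$ would see across the bridge and the swap could destroy its rainbow condition.
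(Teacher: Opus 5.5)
Your proof is correct. It reaches the same contradiction as the paper: a minimum Roman dominating function with an edge joining $V_1$ and $V_2$. It differs only in how it handles the case $f(u)=f(v)\neq 0$.

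The paper splits $G$ at the bridge into the components $G_u$ and $G_v$. It asserts that $\tilde\gamma_2(G_u)+\tilde\gamma_2(G_v)=\tilde\gamma_2(G)$ and $\gamma_R(G_u)+\gamma_R(G_v)=\gamma_R(G)$, and deduces that both pieces are themselves $(\gamma_R,\frac 32\tilde\gamma_2)$-graphs. It then glues an induced minimum function on $G_u$ with $u\in V_1$ to one on $G_v$ with $v\in V_2$. (The paper writes $f_v(v)=1$ there, presumably a typo for $f_v(v)=2$.)

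You instead swap the two colour classes on $G_v$ inside $G$ itself. The only thing to verify is then that no empty vertex sees across the bridge, and you check this explicitly. Your route therefore avoids the two additivity identities the paper leaves as ``easy to see''. Their justification is exactly your observation: because $u$ and $v$ are coloured, restricting to either side keeps every empty vertex dominated. You also need no extremality of the pieces. The paper's detour buys one extra structural fact in return: both sides of such a bridge are again $(\gamma_R,\frac 32\tilde\gamma_2)$-graphs.

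One remark on the fact you import, that every minimum rainbow 2-dominating function of an extremal graph (in particular $h'$) induces a minimum Roman dominating function. It is valid, but it silently relies on $|V_{\{1\}}|=|V_{\{2\}}|$. That equality comes from the equality chain in the proof of Theorem~\ref{t1}, applied to $h'$.
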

\begin{proof}
    Let $uv\in E(G)$ be a bridge in a $(\gamma_R,\frac 32\tilde\gamma_2)$-graph, where $f=(V_0,V_1,V_2)$ is an induced minimum Roman dominating function, and suppose 
    $|\{u,v\}\cap V_0|=0$. Note first that if $u\in V_1$ and $v\in V_2$ (or the other way), then $(V_0\cup\{u\}, V_1-\{u\},V_2)$ is a smaller Roman dominating function of $G$, a contradiction. Hence, either $\{u,v\}\subseteq V_1$ or $\{u,v\}\subseteq V_2$.

    Without loss of generality let $\{u,v\}\subseteq V_1$ and denote by $G_u$ and $G_v$ the two connected components of $G-uv$, such that $u\in V(G_u)$ and $v\in V(G_v)$. Then it is easy to see that $\tilde\gamma_2(G_u)+\tilde\gamma_2(G_v)= \tilde\gamma_2(G)$ and $\gamma_R(G_u) + \gamma_R(G_v) = \gamma_R(G)$. Hence, both $G_u$ and $G_v$ are $(\gamma_R, \frac 32 \tilde\gamma_2)$-graphs. Let $f_u$ be a minimum Roman dominating function induced by a rainbow 2-dominating function in $G_u$ such that $f_u(u)=1$ and let $f_v$ be a minimum Roman dominating function induced by a rainbow 2-function in $G_v$ such that $f_v(v)=1$. Then $f_u\cup f_v$ is a minimum Roman dominating function in $G$ of the same weight as $f$. However, since there is an edge connecting a vertex belonging to $V_1$ to a vertex belonging to $V_2$, $f_u\cup f_v$ (and therefore $f$) is not a minimum Roman dominating function of $G$, a contradiction. Therefore, the result follows.
\end{proof}

\begin{theorem}\label{t4}
    Let $G$ be a $(\gamma_R,\frac 32\tilde\gamma_2)$-graph. Then
    \begin{enumerate}
     \renewcommand{\labelenumi}{(\roman{enumi})}
        \item There is no $\tilde\gamma_2(G)$-set such that a vertex belonging to $V_{\{1\}}$ is adjacent to a vertex belonging to $V_{\{2\}}$.
        \item No vertex is adjacent to more than~2 vertices belonging to $V_{\{1\}}$ (or to $V_{\{2\}}$, by symmetry).
        \item If $v\in V_{\{1\}}$, then $v$ is adjacent to at most one other vertex in $V_{\{1\}}$ (analogous result is true for $V_{\{2\}}$). 
    \end{enumerate}
\end{theorem}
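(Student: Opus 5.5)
The common tool for all three parts is the induced Roman dominating function. Fix any minimum rainbow 2-dominating function $h$ with classes $V_\emptyset,V_{\{1\}},V_{\{2\}}$. As observed before Theorem~\ref{t1}, since $G$ is a $(\gamma_R,\frac 32\tilde\gamma_2)$-graph, both maps $f=(V_\emptyset,V_{\{1\}},V_{\{2\}})$ and $f'=(V_\emptyset,V_{\{2\}},V_{\{1\}})$ are minimum Roman dominating functions. Here $f$ gives value $2$ to $V_{\{2\}}$, and $f'$ gives value $2$ to $V_{\{1\}}$. By the proof of Theorem~\ref{t1}, equality forces $|V_{\{1\}}|=|V_{\{2\}}|=\frac12\tilde\gamma_2(G)$ for every $\tilde\gamma_2(G)$-set. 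The strategy for each part is the same: assume the forbidden configuration, modify $f$ or $f'$ locally, and obtain a Roman dominating function of weight below $\gamma_R(G)$, or a rainbow function of weight below $\tilde\gamma_2(G)$.

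\emph{Part (i).} Suppose $u\in V_{\{1\}}$ is adjacent to $w\in V_{\{2\}}$. Consider $f$, in which $u$ has value $1$ and $w$ has value $2$. Set $f(u)=0$. Then $u$ is dominated by $w$, and every other vertex keeps its status, so the weight drops by one. This contradicts minimality, exactly as in the first step of the proof of Theorem~\ref{t3}.

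\emph{Part (ii).} Suppose some vertex $x$ has three neighbours $a,b,c\in V_{\{1\}}$; by the symmetry between $f$ and $f'$ the $V_{\{2\}}$ case is identical. I would work with $f'$, where $a,b,c$ have value $2$ and $V_{\{2\}}$ has value $1$.
\begin{itemize}
\item If $x\in V_\emptyset$: put $f'(x)=2$ and $f'(a)=f'(b)=f'(c)=0$. The vertices $a,b,c$ are now dominated by $x$. The weight changes by $+2-6<0$, but some vertices of $V_\emptyset$ may have lost their only value-$2$ neighbour.
\item If $x\in V_{\{1\}}$ or $x\in V_{\{2\}}$: first note that $x\in V_{\{2\}}$ is impossible by (i), since $x$ would be adjacent to $a\in V_{\{1\}}$. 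So $x\in V_{\{1\}}$, and giving $x$ value $2$ (from $2$, no change) and zeroing $a,b,c$ again saves weight, up to the same repair issue.
\end{itemize}

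The main obstacle is this repair step. Zeroing $a,b,c$ may leave vertices of $V_\emptyset$ that were dominated (in the Roman sense under $f'$) only by one of $a,b,c$. I would handle this by switching the frame of reference to the rainbow side, i.e.\ recolouring inside $h$ rather than $f'$. The goal is to show that a minimum rainbow function can always be chosen so that the three $V_{\{1\}}$-neighbours of $x$ are not simultaneously needed. Concretely, I would try to show that each of $a,b,c$ must have a private $V_\emptyset$-neighbour, with respect to colour $1$, outside $N[x]$. I would then count weight in $f'$ versus $f$ using $|V_{\{1\}}|=|V_{\{2\}}|$, aiming to show that some recolouring, for instance giving $x$ colour $1$ and one of $a,b,c$ colour $2$, produces a rainbow 2-dominating function with unequal class sizes. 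That would contradict the forced equality $|V_{\{1\}}|=|V_{\{2\}}|$. If this route fails, I would fall back on a direct Roman construction in $f$, using $x$ as a value-2 vertex.

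\emph{Part (iii).} Suppose $v\in V_{\{1\}}$ has two neighbours $a,b\in V_{\{1\}}$. In $f'$ all three vertices have value $2$. Replace this by $f'(v)=2$ and $f'(a)=f'(b)=0$, which saves $4$ in weight. Any vertex of $V_\emptyset$ left undominated by this change is then repaired. The plan is to show the repair costs less than the saving, or else to derive an unequal-class rainbow function as in (ii). I expect this to reuse the argument of (ii) essentially verbatim with $x=v$.
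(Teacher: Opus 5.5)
\textbf{There is a gap in parts (ii) and (iii).} Your part (i) is correct and is the paper's argument. In (ii) and (iii), however, you work in the wrong induced function. In $f'$ the vertices you want to zero out ($a,b,c$, or $a,b$) carry value $2$. Removing value-$2$ vertices can leave vertices of $V_\emptyset$ undominated, namely those whose only $V_{\{1\}}$-neighbour was among them. Nothing in your argument bounds how many such vertices there are, so the repair may cost more than the weight you saved.

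The rainbow-side workaround (private neighbours, then a recolouring with unequal class sizes) is not carried out. It is also unclear how it would keep the function both rainbow $2$-dominating and of weight exactly $\tilde\gamma_2(G)$, which you need before the equality $|V_{\{1\}}|=|V_{\{2\}}|$ can be invoked. For instance, giving $x\in V_\emptyset$ colour $1$ already raises the weight. So, as written, (ii) and (iii) remain unproved.

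\textbf{The fix is the ``fallback'' you mention at the end, and it is the whole proof.} Work in $f$, where $V_{\{1\}}$ has value $1$.
\begin{enumerate}
\item If $x\in V_\emptyset$ has three neighbours $a,b,c\in V_{\{1\}}$, redefine $f(x)=2$ and $f(a)=f(b)=f(c)=0$. The set of value-$2$ vertices only grows (by $x$), so every previously dominated vertex stays dominated, and $a,b,c$ are dominated by $x$. The weight drops by $1$, contradicting minimality.
\item If $v\in V_{\{1\}}$ has two neighbours $a,b\in V_{\{1\}}$, set $f(v)=2$ and $f(a)=f(b)=0$. This again saves $1$ with no repair needed. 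This proves (iii), and it also settles (ii) for $x\in V_{\{1\}}$. The case $x\in V_{\{2\}}$ is excluded by (i), as you noted.
\item The $V_{\{2\}}$ versions follow by the same argument in $f'$.
\end{enumerate}
These are exactly the Cockayne et al.\ properties of $\gamma_R$-functions that the paper's proof cites: every vertex of $V_0$ has at most two neighbours in $V_1$, and $G[V_1]$ has maximum degree at most $1$. The general rule is to choose the induced function in which the class under scrutiny has value $1$. Then your local modification only adds a value-$2$ vertex and never removes one.
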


\begin{proof}
    Let $G$ be a $(\gamma_R,\frac 32\tilde\gamma_2)$-graph and let $h$ be a minimum rainbow 2-dominating function such that $uv\in E(G)$ and $u\in V_{\{1\}}$, while $v\in V_{\{2\}}$. Consider the Roman dominating function $f=(V_0, V_1, V_2)$ induced by $h$. Since $f$ is of minimum weight, a result published in~\cite{cockayne2004roman} imply that no edge of $G$ joins $V_1$ and $V_2$. For this reason \textit{(i)} is true.

    Similar reasoning and properties of minimum Roman dominating functions proven in~\cite{cockayne2004roman} can be used to justify the results expressed in \textit{(ii)}. and~\textit{(iii)}.  
\end{proof}

\subsubsection*{NP-hardness of deciding $\gamma_R(G)=\frac 32\tilde\gamma_2(G)$}

Notwithstanding the numerous known properties of $(\gamma_R,\frac 32\tilde\gamma_2)$-graphs, determining whether a given graph belongs to this family is NP-hard, even when restricted to bipartite graphs.

The proof is based on presenting a polynomial transformation from the Not-All-Equal 3-Satisfiability (NAE-3SAT) problem, which is a variant of the 3SAT problem. NAE-3SAT is satisfied if and only if the three values in each clause are not all equal to each other, that is, at least one is true and at least one is false. In this proof we use structural results obtained in Theorem~\ref{t4} for $(\gamma_R,\frac 32\tilde\gamma_2)$-graphs. 

\begin{theorem}\label{tNP}
    For a given graph $G$, it is NP-hard to determine whether  $\gamma_R(G) = \frac 32 \tilde\gamma_2(G)$, even when $G$ is bipartite.
\end{theorem}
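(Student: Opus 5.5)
We reduce from NAE-3SAT, which is NP-complete. Given a formula $\Phi$ with variables $x_1,\dots,x_n$ and clauses $C_1,\dots,C_m$, each containing three literals, we will build in polynomial time a bipartite graph $G_\Phi$ such that $\gamma_R(G_\Phi)=\frac 32\tilde\gamma_2(G_\Phi)$ holds if and only if $\Phi$ has a NAE-satisfying assignment. The guiding idea comes from Theorems~\ref{t1} and~\ref{t4}. In a $(\gamma_R,\frac 32\tilde\gamma_2)$-graph, every minimum rainbow 2-dominating function has $|V_{\{1\}}|=|V_{\{2\}}|$. It has no edge between the two color classes, and every uncolored vertex sees both colors. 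The equality therefore behaves like a balanced $2$-coloring (``true'' $=\{1\}$, ``false'' $=\{2\}$) of the forced vertices, and a clause vertex that must see both colors encodes exactly the not-all-equal condition.

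\textbf{Construction.} For each variable $x_i$ we take a small bipartite gadget $H_i$, for example a $C_4$ or the $8$-vertex graph of Fig.~\ref{fig2}. It contains two distinguished vertices $x_i,\bar x_i$ and has these properties:
\begin{itemize}
\item $\delta(H_i)\ge 2$;
\item $\tilde\gamma_2(H_i)=2t$ and $\gamma_R(H_i)=3t$ for a fixed $t$;
\item in every minimum rainbow 2-dominating function, $x_i$ and $\bar x_i$ are colored, and they receive different colors.
\end{itemize}
For each clause $C_j$ we add a vertex $c_j$ adjacent to the three literal vertices occurring in $C_j$, together with a small pendant-free attachment (e.g. a $4$-cycle through $c_j$ with fresh vertices) that keeps $\delta\ge 2$ and is dominated for free. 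Literal vertices and clause vertices are placed on opposite sides of the bipartition, which keeps the whole graph bipartite.

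\textbf{Counting step.} We first show that $\tilde\gamma_2(G_\Phi)=2tn+\beta$ always holds, for a fixed contribution $\beta$ of the clause attachments. Here the clause vertices are either uncolored or repaired at a fixed cost that does not depend on $\Phi$; we would prove this by exhibiting a function and using the lower bound obtained by restricting to the gadgets. Next we show $\gamma_R(G_\Phi)\le 3tn+\frac32\beta$ in general. The equality $\gamma_R=\frac 32\tilde\gamma_2$ then holds exactly when there is a minimum rainbow function in which every $c_j$ is uncolored and sees both colors among its literal neighbors, with no $\{1\}$–$\{2\}$ edges. By Theorem~\ref{t4}(i) and Theorem~\ref{t1}(iii), this is the case exactly when the colors of $x_i$ define a NAE assignment.

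\textbf{Main obstacle.} The hard direction is to show that when $\Phi$ is NAE-satisfiable, the Roman number really reaches $\frac32\tilde\gamma_2$, i.e. that no cheaper Roman dominating function exists. Roman functions can put a $2$ on a clause vertex or on a high-degree literal vertex and thereby save weight. The gadgets must be chosen rigid enough that every minimum Roman function restricted to $H_i$ still costs $3t$ and a $2$ on $c_j$ never helps. My first attempt would be to replace each literal vertex by a ``blown-up'' pair of twins, so that the degree advantage of a $2$ placed there is neutralized. I would then verify the local lower bound gadget by gadget through a discharging argument. The converse direction goes as follows: if $\Phi$ is not NAE-satisfiable, some clause sees only one color, which forces an extra colored vertex. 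That makes $\tilde\gamma_2$ odd or unbalanced relative to $\gamma_R$, so the equality fails by Theorem~\ref{t1}(i).
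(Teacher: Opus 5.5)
Your overall shape matches the paper: a reduction from NAE-3SAT, variable gadgets of rainbow cost $2$ and Roman cost $3$, and clause vertices that must see both colours. But the step you defer as the ``main obstacle'' is exactly where your construction fails, and the missing idea is the paper's triplication of clause vertices.

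\textbf{The single clause vertex breaks the lower bound on $\gamma_R$.} With one clause vertex per clause, $\gamma_R(G_\Phi)\ge\frac32\tilde\gamma_2(G_\Phi)$ is false even for NAE-satisfiable $\Phi$. Ignore the attachment for now, take the single clause $(x_1\vee x_2\vee x_3)$, let $H_i$ be the $4$-cycle $x_ia_i\bar x_ib_i$, and join $c_1$ to $x_1,x_2,x_3$.
\begin{itemize}
\item We have $\tilde\gamma_2=6$. Each $a_i$ forces two coloured vertices in its cycle, and $x_1,\bar x_2,x_3\mapsto\{1\}$, $\bar x_1,x_2,\bar x_3\mapsto\{2\}$ is feasible.
\item Yet $f(c_1)=f(\bar x_1)=f(\bar x_2)=f(\bar x_3)=2$, with $f=0$ elsewhere, is a Roman dominating function of weight $8<9$.
\end{itemize}
Your twin blow-up addresses a $2$ on a literal vertex. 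That is harmless, since the intended Roman function puts a $2$ there anyway. The damaging move is a $2$ on $c_j$.

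\textbf{The paper's fix.} The paper uses three vertices $x_j^1,x_j^2,x_j^3$, each adjacent to all three literal vertices of $C_j$. A variable $4$-cycle always has Roman weight at least $2$, and weight exactly $2$ only if one of its literal vertices has value $0$ and a clause neighbour of value $2$. For a clause containing such a helper, there are two cases.
\begin{itemize}
\item Some literal vertex of the clause has value $2$. Then the helpers save at most $2$ while costing at least $2$.
\item No literal vertex of the clause has value $2$. Then all three clause vertices need positive weight, costing at least $4$ against a saving of at most $3$.
\end{itemize}
This yields $\gamma_R\ge 3n$, with equality for NAE-satisfiable $\Phi$.

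The triple also drives the converse. By Theorem~\ref{t4} (no $\{1\}$--$\{2\}$ edges, no vertex with three neighbours in one colour class), and using that $a_i,b_i$ have degree $2$, one of $x_j^1,x_j^2,x_j^3$ is uncoloured. It therefore sees both colours on literal vertices, so the other two cannot be coloured either. Hence every clause vertex is uncoloured and sees both colours among its literals, which is a NAE assignment.

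\textbf{Other steps that fail as stated.}
\begin{itemize}
\item \emph{The clause attachment is unnecessary and fatal.} $c_j$ already has degree $3$. No attachment of fresh vertices can be free: if $c_j$ and all fresh vertices were uncoloured, the fresh vertices would see no colour. For your $4$-cycle $c_jp_jq_jr_j$ with $c_j$ uncoloured, $p_j$ and $r_j$ must be coloured. Colouring them differently gives $c_j$ both colours regardless of the literals. So $\tilde\gamma_2=2n+2m$ for every $\Phi$ (this is your ``fixed $\beta$''), and the NAE condition is no longer encoded. Since $f(q_j)=2$ covers each attachment with Roman weight $2$, every NAE-satisfiable $\Phi$ gives $\gamma_R\le 3n+2m<\frac32(2n+2m)$. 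So equality fails on every yes-instance.
\item \emph{The counting step is inconsistent.} $\tilde\gamma_2$ cannot be independent of $\Phi$ and also increase whenever a clause is not NAE-satisfied.
\item \emph{The gadget examples do not work.} The graph of Fig.~\ref{fig2} is not bipartite: its chord joins two vertices at distance $4$ on the $8$-cycle, closing a $5$-cycle. The $4$-cycle violates your third gadget property, since colouring $a_i,b_i$ differently is also a minimum rainbow function.
\item \emph{``Odd or unbalanced'' does not prove the converse.} Two repaired clauses keep $\tilde\gamma_2$ even. You would need to exhibit an unbalanced minimum rainbow function, or argue structurally via Theorem~\ref{t4} as above.
\end{itemize}
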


\begin{proof}
 
Given an instance $E$, that is the set of literals $U = \{u_1, u_2, \dots , u_n\}$ and the set of clauses $C = \{C_1, C_2, \dots , C_m\}$ of NAE-3SAT, we construct a graph $G$ whose order is polynomially bounded in terms of $n$ and $m$, and such that the formula is satisfiable in terms of NAE-3SAT if and only if $\gamma_R(G) = \frac 32 \tilde\gamma_2(G)$.

For each literal $u_i$ construct a copy of the cycle $C_4$, that is $G(u_i)=(u_i,a_i, u'_i, b_i)$ for $i=1,2,\dots ,n$. For each clause  $C_j, j=1, 2,\dots ,m$ there are three nodes, $x_j^1, x_j^2, x_j^3$. All nodes of $G(C_j))$ are adjacent to literal nodes in such a way that in the resulting graph $G$ each of them is of degree~3. 

Additionally, for every clause $C_j$ with literals $l_1, l_2$ and $l_3$, we create the nine edges $xj^1l_t, x_j^2l_t, x_j^3l_t$ for $t=1,2,3$ (see Fig.\ref{fig3} for an example $C_1=(\lnot u_1 \lor u_2\lor u_3)$). It is easily seen that the obtained graph $G$ has $n(G)=4n+3m$ vertices, $m(G)=4n+9m$ edges and is bipartite.

\begin{figure}[h!]
\begin{center}
\begin{tikzpicture}[scale=0.75, rotate=90]
\tikzstyle{vertex0}=[circle,fill=black!25,minimum size=16pt,inner sep=0pt]
  \tikzstyle{edge} = [draw, thick,-]
  
\node[vertex0] (x3) at (5,0) {$x_1^3$};
\node[vertex0] (x2) at (5,4) {$x_1^2$};
\node[vertex0] (x1) at (5,8) {$x_1^1$};

\node[vertex0] (u3) at (9,1) {$u_3$};
\node[vertex0] (u2) at (9,5) {$u_2$};
\node[vertex0] (u1) at (9,9) {$u_1$};

\node[vertex0] (u3') at (9,-1) {$u'_3$};
\node[vertex0] (u2') at (9,3) {$u'_2$};
\node[vertex0] (u1') at (9,7) {$u'_1$};

\node[vertex0] (a3) at (9,0) {$a_3$};
\node[vertex0] (a2) at (9,4) {$a_2$};
\node[vertex0] (a1) at (9,8) {$a_1$};

\node[vertex0] (b3) at (10,0) {$b_3$};
\node[vertex0] (b2) at (10,4) {$b_2$};
\node[vertex0] (b1) at (10,8) {$b_1$};

\foreach \i in {1,2,3} \path[edge] (u\i)--(a\i)--(u\i')--(b\i)--(u\i);
\foreach \w in {u1', u2, u3} \path[edge] (x1)--(\w);
\foreach \w in {u1', u2, u3} \path[edge] (x2)--(\w);
\foreach \w in {u1', u2, u3} \path[edge] (x3)--(\w);

\end{tikzpicture}
\end{center}
\caption{The clause $C_1=(\lnot u_1 \lor u_2\lor u_3)$}\label{fig3}
\end{figure} 

For a minimum rainbow 2-dominating set of $G$ at least two nodes from each $G(u_i)$ should be assigned non-identical color sets in order to dominate $a_i$ and $b_i$. Without loss of generality, we may assume that nodes $u_i$ and $u_i'$ are assigned two different colors, since otherwise. Then, if $E$ is satisfiable in terms of NAE-3SAT, we may assign $\{2\}$ to $u_i$ and $\{1\}$ to $u_i'$, for $i=1,2,\dots, n$ if $u_i$ is true and $\{1\}$ to $u_i$ and $\{2\}$ to $u_i'$ otherwise. Additionally, we assign $\emptyset$ to every other node $y$ of $G$. Since $E$ is satisfiable if and only if in each clause at least one literal is true and at least one is false, each vertex of $x_j^1, x_j^2, x_j^3$, $j=1,2,\dots ,m$ has both colors in its neighborhood. Hence, if $E$ is satisfiable, then $\tilde\gamma_2(G)=2n$. 

If $E$ is not satisfiable, in each assignment of color sets to the literal vertices there are some $x_j^1, x_j^2, x_j^3$, $j=1,2,\dots ,m$ vertices adjacent to just one color set. Hence in this situation $\tilde\gamma_2(G)=2n$, and therefore, $\tilde\gamma_2(G)=2n$ if and only if $E$ is satisfiable in NAE-3SAT.

For the Roman domination observe first, that at most two vertices of each $C_4$ may belong to $V_0$. Therefore, $\gamma_R(G)\geq 3n$. If $E$ is satisfiable in terms of NAE-3SAT, we may construct a Roman dominating function $f$ such that $f(u_i)=2$ and $f(u_i')=1$, $i=1,2,\dots, n$ if $u_i$ is true and $f(u_i)=1$ and $f(u_i')=2$ otherwise, and $f(y)=0$ for every other vertex $y$ of $G$. Since each node of $x_j^1, x_j^2, x_j^3$, $j=1,2,\dots ,m$ is adjacent to a literal vertex $l$ with $f(l)=2$, $f$ is a minimum Roman dominating function of weight $3m$.

As a result, if $E$ has a true/false assignment to the literals in $U$ such that in each clause at least one literal is true and at least one is false, then $\gamma_R(G)=3m=\frac 32 \cdot 2n=\frac 32\tilde\gamma_2(G)$.

Now we prove that if $\gamma_R(G)=\frac 32\tilde\gamma_2(G)$, then $E$ is satisfiable in terms of NAE-3SAT. Let $G$ be a $(\gamma_R,\frac 32\tilde\gamma_2)$-graph, and let $h=(V_\emptyset, V_{\{1\}}, V_{\{2\}})$ be a minimum rainbow 2-dominating function. Without loss of generality let us consider vertices belonging to $V(C_1)$, namely $x_1^1, x_1^2, x_1^3$. Then, by Theorem~\ref{t4}, they cannot belong at the same time to $V_{\{1\}}$ (and to $V_{\{2\}}$, by symmetry). Hence assume, without loss of generality, that $x_1^1, x_1^2\in V_{\{1\}}$, while $x_1^3\in V_{\{2\}}$ and let $l_1$ be a literal vertex adjacent to $x_1^1, x_1^2, x_1^3$. Then, by Theorem~\ref{t4}, $l_1\in V_\emptyset$, $l_1$ has at most one more neighbor in $V_{\{2\}}$ and no other neighbors in $V_{\{1\}}$. However, since $a_1$ and $b_1$ are of degree~2 in $G$, neither $a_1$ nor $b_1$ can belong to $V_\emptyset$. Therefore, we conclude that at least one vertex of $x_1^1, x_1^2, x_1^3$ belongs to $V_\emptyset$.

Assume $x_1^1\in V_\emptyset$. Then at least one of the literal vertices adjacent to $x_1^1$ belongs to $V_{\{1\}}$ and at least one literal vertex belongs to $V_{\{2\}}$. In this situation Theorem~\ref{t4} implies that both $x_1^2$ and $x_1^3$ belong to $V_\emptyset$, too. Hence, $V_{\{1\}}\cup V_{\{2\}}$ is composed solely of vertices of the $G(u_i)$ graphs, $i=1,2,\dots, n$. Since these graphs are disjoint, at least two vertices belonging to each $V(G(u_i))$ are members $V_{\{1\}}\cup V_{\{2\}}$, which in consequence implies that $\tilde\gamma_2(G)\geq 2n$. On the other hand, by Theorem~\ref{t4}, at most two vertices of each $G(u_i)$ belong to $V_{\{1\}}\cup V_{\{2\}}$, and therefore $\tilde\gamma_2(G)\leq 2n$. For these reasons $\tilde\gamma_2(G)= 2n$, and consequently, $E$ is satisfiable in NAE-3SAT.
\end{proof}

 Since each rainbow $k$-dominating set is also a $k$-rainbow dominating set, $\gamma_{rk}(G)\leq\tilde\gamma_k(G)$. Further,
it is proven in~\cite{Wu2010NoteO2, Alvarado2015Relating2D, Chellali2013, FUJITA2013806} that the Roman domination number can be bounded from above by $\frac 32$ times the 2-rainbow domination number. Therefore,
\[
\gamma_R(G)\leq \frac 32 \gamma_{r2}(G)\leq\frac 32\tilde\gamma_2(G).
\]
In contrast to the conjecture given in~\cite{Alvarado2015Relating2D} that graphs satisfying $\gamma_R(G)= \frac 32 \gamma_{r2}(G)$ might admit a polynomial-time recognition algorithm due to their seemingly restrictive structure, we note that the argument used in the proof of Theorem~\ref{tNP} applies, with only minor modifications, to the equality $\gamma_R(G)=\frac 32\gamma_{r2}(G)$. As a consequence, recognizing graphs satisfying $\gamma_R(G)=\frac 32\gamma_{r2}(G)$ is NP-hard as well, and the NP-hardness of recognizing graphs with $\gamma_R(G)=\frac 32\tilde\gamma_{2}(G)$ can be viewed as a stronger structural result, as hardness persists even for this more restricted equality.

\section*{Discussion}

In this paper, we introduced a graph-theoretic framework for modeling the spatial arrangement of facilities in post-disaster shelter camps. Singleton rainbow domination models the placement of distinct essential services, such as sanitation units, kitchens, water points, and schools, while Roman domination provides a complementary framework for services with different capacity levels, illustrated here through Wi-Fi deployment.

Our main algorithmic contribution is the linear-time algorithm for minimum rainbow $k$-domination on trees for fixed $k$. The computational experiments support the theoretical complexity and show good scalability on very large instances. Although realistic shelter-camp graphs are generally not trees, applying the algorithm to a spanning tree of a connected graph yields a feasible solution and hence an efficiently computable upper bound for the original graph. The lower and upper bounds established for general graphs can likewise provide useful estimates of the minimum required number of facilities and the quality of feasible layouts.

Our results on rainbow $2$-domination and Roman domination show that, although the two parameters are closely related, they represent different operational requirements and cannot in general be used interchangeably. In particular, the NP-hardness result demonstrates that recognizing the extremal equality between them remains computationally difficult, motivating the study of restricted graph classes and alternative algorithmic approaches.

\section*{Acknowledgement}
This work has been supported by the European Commission's Horizon Europe Research and Innovation programme through the Marie Skłodowska-Curie Actions Staff Exchanges (MSCA-SE) under Grant Agreement no.101182819 (COVER: (C)ombinatorial (O)ptimization for (V)ersatile Applications to (E)merging u(R)ban Problems).

\bibliographystyle{abbrv}
\bibliography{sample}

\section*{Author contributions statement}

D.G. developed the model, J.R. investigated the problem from both algorithmic and complexity-theoretic perspectives, P.L. conducted the experiments, analysed the results and developed structural bounds. All authors contributed to the preparation of the manuscript draft, reviewed the manuscript, and approved the final version. 

\section*{Additional information}
\textbf{Accession codes}
The complete source code used for the implementation of the algorithm for trees, correctness verification, and runtime benchmarking is publicly available at \url{https://git.pg.edu.pl/pggit_oidc7232510/singleton_rainbow_domination_linear_trees}.

\textbf{Competing interests.} The authors declare that there are no conflicts of interest regarding the publication of this paper.

\end{document}